\documentclass[10pt,journal]{IEEEtran}

\usepackage{amsmath}
\allowdisplaybreaks
\usepackage{cite}
\usepackage{graphicx}
\usepackage{textcomp}
\usepackage{xcolor} 
\usepackage{enumitem}

\usepackage{multirow}
\usepackage{subcaption}
\usepackage{amssymb,amsfonts}
\usepackage{amsthm}
\usepackage{bm}
\usepackage[colorlinks=true, linkcolor=blue, citecolor=blue, urlcolor=blue]{hyperref}
\usepackage{algorithm}
\usepackage{algpseudocode}
\usepackage{amssymb}

\newtheorem{assumption}{Assumption}
\newtheorem{definition}{Definition}
\newtheorem{lemma}{Lemma}
\newtheorem{proposition}{Proposition}
\newtheorem{corollary}{Corollary}
\newtheorem{remark}{Remark}

\begin{document}
\bstctlcite{BSTcontrolNoDash}
\title{Optimal Movable-Antenna Control for \\Multi-Path Sensing Guided by Prior AoA Statistics}

\author{Jaehong~Kim,~\IEEEmembership{Graduate Student Member,~IEEE}, Changsheng~You,~\IEEEmembership{Member,~IEEE}, \\ Jihong~Park,~\IEEEmembership{Senior Member,~IEEE}
        and Seung\mbox{-}Woo~Ko,~\IEEEmembership{Senior Member,~IEEE}%
        \vspace{-0pt}
\thanks{%
J. Kim and S.-W. Ko are with Inha University, Incheon 22212, Republic of Korea (e-mail: kimjaehong@inha.edu, swko@inha.ac.kr). J. Park is with Singapore University of Technology and Design, Singapore (e-mail: jihong\_park@sutd.edu.sg). C. You is with Southern University of Science and Technology, Shenzhen, China (e-mail: youcs@sustech.edu.cn).

An earlier version of this paper will be presented in part at the
IEEE Global Communications Conference (GLOBECOM) 2026, and is available on arXiv~\cite{Kim2026Arxiv}.
%
\vspace{0pt}
}

}


\maketitle

\begin{abstract}
Multi-path sensing, which aims to extract the geometric attributes of multiple propagation paths, is expected to be a key functionality of 6G. A \emph{movable antenna} (MA) can enable this functionality by synthesizing an aperture through mechanical motion. However, existing MA-based sensing methods typically rely on exhaustive scanning over the entire movable region, resulting in significant control overhead and sensing latency, which limit their practicality for agile sensing. To address this challenge, this paper develops a prior-guided agile multi-path sensing framework that leverages weak prior \emph{angle-of-arrival} (AoA) statistics as side information. The proposed framework is built on two key steps. First, the movable plate's three-dimensional orientation is optimized only once to configure a mechanically feasible scan region that enhances path visibility while preserving inter-path discriminability, guided by Fisher information analysis. Second, given the optimal plate orientation, the MA performs only two linear scans, whose non-collinear spatial phase projections are fused with the prior AoA statistics through a \emph{maximum a posteriori} (MAP)-based estimator to recover the elevation and azimuth AoAs of multiple paths. The estimated AoAs are subsequently used to extract the \emph{times-of-arrival} (ToAs) by enhancing the target path component while suppressing interference from other paths. With only one orientation adjustment and two linear scans, the proposed framework enables agile multi-path sensing with significantly reduced control overhead and latency, while achieving AoA and ToA estimation accuracy close to the single-path benchmark.
\end{abstract}
\begin{IEEEkeywords}
Multi-path sensing, movable antenna, orientation control, linear scan, angle-of-arrival, time-of-arrival.   
\end{IEEEkeywords}

\section{Introduction}

One key vision of 6G systems is to exploit radio signals not only for communication but also for high-precision sensing \cite{DZhang2026}. Unlike conventional sensing modalities such as LiDAR and cameras, radio waves naturally propagate through multiple paths. This intrinsic multi-path behavior enables the detection of objects and the inference of environmental geometry even under \emph{Non-Line-of-Sight} (NLoS) conditions \cite{BaqueroBarneto2022}. In this work, we refer to the process of extracting the geometric attributes of these propagation paths, specifically their \emph{Angles-of-Arrival} (AoAs) and \emph{Times-of-Arrival} (ToAs), without performing full channel estimation, as \emph{multi-path sensing}. A promising approach to enhance multi-path sensing capability is to employ a \emph{movable antenna} (MA) \cite{Zhu2026TutorialMA} at the receiver. Through controlled mechanical repositioning over a small spatial region, the MA creates a synthetic aperture, allowing the receiver to capture richer spatial information and to separate closely spaced \emph{signal paths} (SPs) when its motion is appropriately controlled. Motivated by this capability, we propose a prior-aided MA control framework that leverages weak prior AoA statistics to determine a mechanically feasible scan region for the MA. 
Within the configured region, the MA performs only two linear scans to estimate the AoAs and ToAs of multiple SPs, thereby significantly reducing mechanical overhead while maintaining high sensing~resolution.

\vspace{-10pt}
\subsection{Prior Work}

In this subsection, we briefly review existing wireless sensing methodologies, summarizing their representative approaches and inherent limitations.

\subsubsection{6G Sensing}

6G sensing has been commonly categorized into two paradigms, namely sensing-assisted communication and communication-assisted sensing~\cite{FLiu2022}. The former exploits SP parameters such as AoA and ToA for beam management~\cite{Xue2024BeamManagement}, whereas the latter, which is the focus of this work, leverages communication signals to accurately estimate SP parameters~\cite{Gonzalez2024}. Although SP parameters can be reliably estimated in single-path scenarios, the estimation performance significantly degrades in multi-path environments. In particular, different SPs become difficult to resolve when their geometric parameters, such as AoAs or ToAs, are closely~spaced. 

A straightforward way to mitigate this ambiguity is to enlarge the receiver's spatial aperture. In this regard, \emph{extremely large-scale MIMO} (XL-MIMO) has been considered a promising solution, as its widely distributed antenna elements can capture rich spatial variations of the received signal~\cite{Li2019MassiveMIMOLocalization}. This spatial diversity helps resolve closely spaced SPs and improves SP parameter estimation. However, these gains come at the cost of deploying a very large number of antenna elements and associated active RF components, leading to substantial hardware, power, calibration, and deployment costs. Consequently, network operators may be reluctant to adopt XL-MIMO as a dense, large-scale sensing infrastructure~\cite{Bjornson2014HardwareScaling,Gao2016LowRFComplexity}.

Recent efforts have explored more cost-effective ways to enhance SP separability, among which \emph{reconfigurable intelligent surfaces} (RISs) and MAs are representative examples. Both aim to improve sensing performance by making SPs more distinguishable at the receiver, rather than simply increasing the number of active antenna elements. Their operating principles, however, are different. An RIS modifies the propagation of reflected SPs by tuning the phase responses of its passive elements~\cite{Wu2024IntelligentSurfaces}. By steering or reshaping reflected SPs toward desired directions, RISs can improve the separability of SPs and thereby enhance SP parameter estimation. Nevertheless, an RIS can manipulate only the SPs that physically impinge on its surface. To cover multiple SPs of interest, an RIS should therefore either be made extremely large~\cite{Kang2025NearFieldRISLocalization} or be deployed at locations where the SPs are likely to be reflected~\cite{Alexandropoulos2022MultiRISLocalization}, both of which constrain practical deployment.

In contrast, an MA enhances signal-path separability directly at the receiver by mechanically repositioning a single antenna element within its movable plate~\cite{Zhu2026TutorialMA}. This allows the receiver to synthesize a large effective aperture, as in XL-MIMO, while favorably adjusting the spatial sampling of multiple SPs through controlled antenna motion. Therefore, MAs provide a cost-effective and deployment-friendly way to obtain spatial sensing diversity without requiring many active RF chains or carefully placed external reflecting surfaces.

\subsubsection{Movable Antenna-Based Sensing}
Several works on MA-based sensing have been proposed in the literature, grouped into model-based and learning-based approaches.

First, model-based approaches leverage explicit signal models of the sequence of received signals collected during the MA scanning process. By casting these received signals as a well-defined mathematical problem, these approaches enable us to use optimization techniques that provide tractable algorithms and theoretical performance guarantees for SP parameter sensing. A typical example is compressed sensing \cite{Donoho2006CompressedSensing}, which exploits the sparse nature of multi-path propagation and formulates SP parameter estimation as a sparse recovery problem. In~\cite{MA2023}, a successive transmitter-receiver compressed sensing technique is employed to estimate the SP parameters by alternately scanning the movable regions at the transmitter and receiver. In~\cite{ZXiao2024}, an \emph{orthogonal matching pursuit}~(OMP) technique is applied by correlating the received signals with a position-dependent dictionary constructed from the MA sampling positions and progressively canceling the already detected path components from the received signals. This idea is further extended to wideband systems in~\cite{SCao2025}, which develops a \emph{simultaneous OMP}~(SOMP)-based framework that jointly exploits the common sparsity shared across subcarriers. For the near-field regime, a subregion-based estimation technique is developed in~\cite{Sun2026NearFieldMA}, where the movable region is partitioned into subregions in which the far-field approximation holds, the per-subregion angles are estimated via {Newtonized OMP}, and the scatterers are localized by clustering the resulting directional rays across subregions.
Another representative technique is tensor decomposition, which exploits the low-rank structure of the received signals \cite{Sidiropoulos2017TensorSP}. In \cite{RZhang2024TensorMA}, the received signals collected over multiple MA positions are expressed as third-order tensors, enabling SP parameter estimation via canonical polyadic decomposition.  

Second, learning-based approaches employ neural networks for multi-path sensing. By learning a direct mapping from sequences of received signals to SP parameters using training data, these approaches can exploit latent statistical patterns embedded in MA measurements and, in some cases, jointly design the MA sampling positions and the neural estimator. Once trained, they can estimate the SP parameters with low online computational complexity, without exhaustively searching over candidate antenna positions. In~\cite{Jang2025LearningMA}, for example, the MA sampling positions and the channel-angle estimator are jointly optimized offline using a neural network trained over a statistical distribution of channel components. During online estimation, the learned positions remain fixed, while the remaining network layers estimate the \emph{Angles-of-Departure} (AoDs) and AoAs from the received signal. This framework is extended to wideband systems in~\cite{Jang2026DLMAWideband}, where the ToA of each SP is jointly estimated and paired with its corresponding AoD and AoA.

More recently, hybrid approaches that integrate model-based and learning-based ones have emerged to combine their complementary strengths. In~\cite{Feng2026DLMA}, the SP parameters are first estimated using SOMP~\cite{SCao2025}, after which the resulting model-based estimates are refined by a Swin-Transformer-based denoising network~\cite{Liu2021SwinTransformer}. This two-stage framework preserves the structural interpretability and tractability of compressed sensing while leveraging the noise-suppression and nonlinear-representation capabilities of neural networks.

Despite their effectiveness, the aforementioned approaches incur substantial overhead. Model-based approaches typically require measurements from many MA positions to capture sufficient spatial information for accurate SP parameter estimation. Learning-based approaches can reduce online search complexity, but they require large, representative training datasets and may need retraining when the propagation statistics change. These requirements limit sensing agility and increase the overall implementation cost~\cite{Zhu2024MovableAntennasMagazine}.

\vspace{-10pt}
\subsection{Contributions}
This work considers a model-based approach in which the receiver is equipped with an MA mounted on a movable plate. As discussed above, multi-path sensing requires the MA to be scanned over a large number of positions across the entire plate to collect sufficient spatial information, which limits sensing agility. To overcome this limitation, we leverage prior AoA statistics inferred from the surrounding environment, such as nearby static buildings and obstacles. Such statistics can be readily obtained using several techniques, including large multimodal models~\cite{Kim2026} and digital twins~\cite{Abouamer2025}. We treat these prior statistics as weak yet useful side information and use them to optimize the movable plate's orientation before scanning. Guided by these priors, the plate is configured only once so that all SPs remain visible and their SP parameters well separated. As a result, the AoA pairs and ToAs of the multiple SPs can be rapidly estimated from only two linear MA scans. The main contributions are as follows:
\begin{itemize}[leftmargin=*]
\item \textbf{Movable-Plate Orientation Control:}
We first develop a prior-driven movable-plate orientation-control scheme that optimizes the three-dimensional orientation of the movable plate before scanning using only prior AoA statistics. From a Fisher information perspective, the AoA of each SP can be estimated more accurately when the corresponding SP projections onto the plate are more clearly separated. 
To promote this separability under AoA uncertainty, we characterize the probability that the projected coordinates of any SP pair reverse the ordering implied by their mean projections, referred to as an \emph{order-reversal event}. We derive a tractable closed-form upper bound on this probability using a probabilistic inequality (see Proposition \ref{prop:cantelli}) and incorporate it into the orientation-design objective. Next, we impose a front-side constraint to ensure that all SPs impinge on the front side of the plate (see Proposition \ref{prop:front-side-condition}). The resulting optimization problem is solved using \emph{sequential quadratic programming} (SQP) \cite{SQP2000}. The optimized orientation thereby makes the SP projections reliably separable, facilitating accurate estimation of their parameters.

\item \textbf{AoA and ToA Estimation with Two Linear Scans:} Given the optimized orientation, the MA performs two linear scans along the horizontal and vertical directions of the plate. Each scan yields a set of AoA-related projection parameters, which should be correctly paired across the two scans to recover the azimuth and elevation AoAs of each SP. 
To this end, we formulate a \emph{maximum a posteriori}~(MAP)-based pairing rule that incorporates the prior AoA statistics and develop a low-complexity algorithm that evaluates this rule only for a small set of likely candidate pairings. Using the recovered AoAs, the ToA of each SP is then estimated by applying spatial filtering to the received signals collected during the two scans, thereby enhancing the resolution of the target SP's ToA while suppressing the others.

\item \textbf{Validation in Realistic Ray-Traced Environments:}
We validate the proposed framework using NVIDIA Sionna RT in ray-traced models of the Florence Duomo and the Inha Aerospace Campus, accounting for reflection, diffraction, and scattering~\cite{Hoydis2023SionnaRT}. The results confirm that realistic environments provide informative prior AoA statistics and that the proposed algorithms remain accurate under practical multi-path propagation.
\end{itemize}

\begin{figure}[t]
  \centering
  \includegraphics[width=0.8\linewidth]{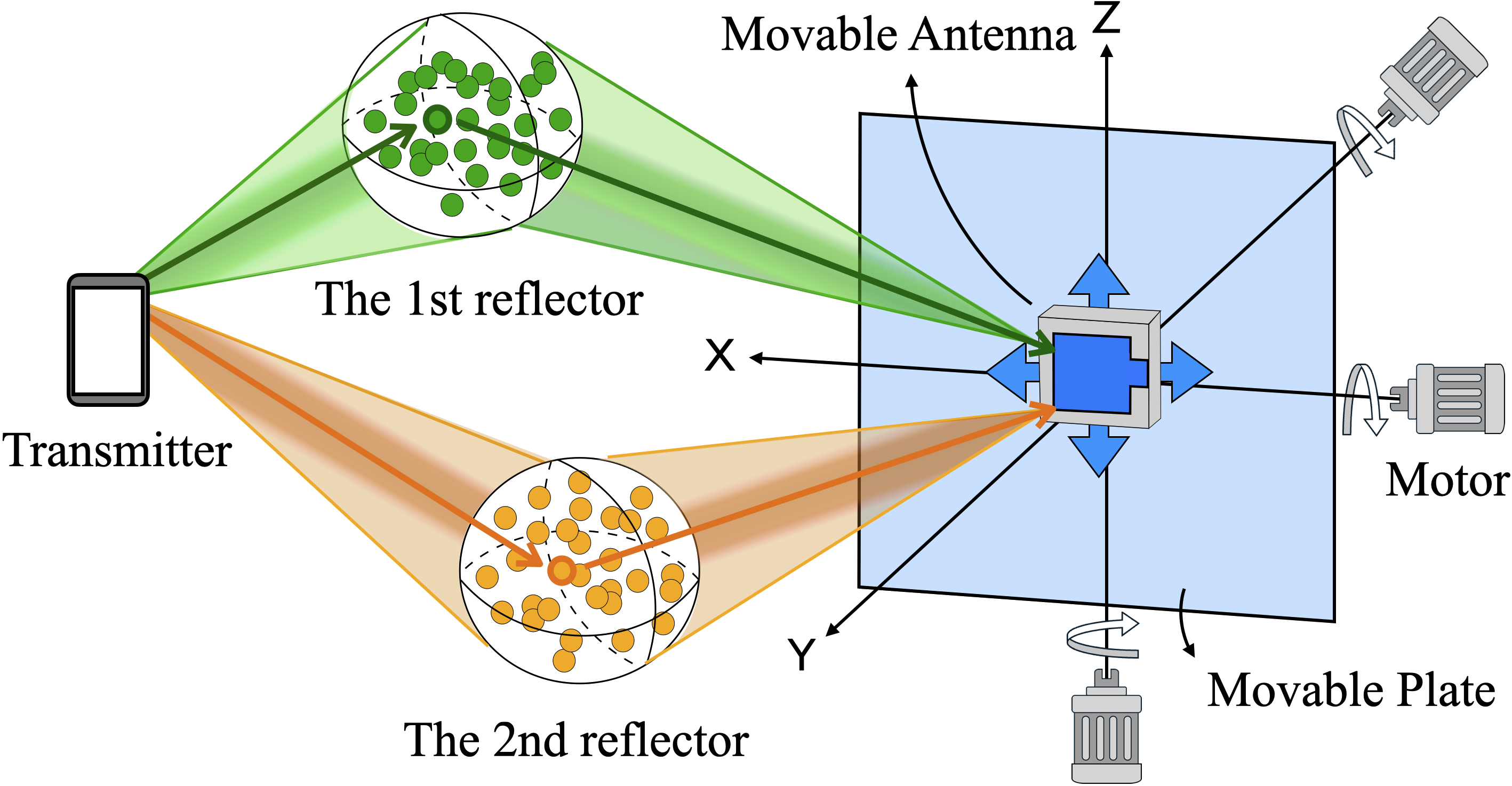}
  \caption{A graphical illustration representing MA-based multi-path sensing under AoA uncertainty ($L=2$).\vspace{-15pt}}
  \label{fig:MA Scenario}
\end{figure}

\section{System Model}\label{sec:system_model}
This section describes the system model for the multi-path sensing scenario, as shown in Fig.~\ref{fig:MA Scenario}. 
The detailed scenario is explained first, including the geometry of each entity, the parameters to be estimated, and key assumptions. Then, its signal models are described, followed by the problem definition to be addressed throughout the work. 

\vspace{-10pt}
\subsection{Scenario Description}
Consider a multi-path sensing scenario comprising a transmitter and a receiver equipped with a single rigid antenna and a single MA, respectively. The receiver's MA is mounted on a \emph{movable plate} (light blue region in Fig.~\ref{fig:MA Scenario}), whose orientation can be mechanically adjusted via a set of tilting angles to be specified in the sequel. The MA can move over this plate, thereby forming the synthetic aperture. For notational clarity, we define an initial \emph{three-dimensional} (3D) Cartesian coordinate system ($\mathsf{X}^{(0)}$, $\mathsf{Y}^{(0)}$, $\mathsf{Z}^{(0)}$) by placing the receiver's reference point at the origin, i.e., $\boldsymbol{p}_0 = [0,0,0]^\top$. In this initial frame, the $\mathsf{X}^{(0)}$-$\mathsf{Z}^{(0)}$ plane is set parallel to the width-height of the movable plate's initial orientation, while the $\mathsf{Y}^{(0)}$ axis is chosen as the plate's outward normal.

A signal broadcast from the transmitter propagates to the receiver through a multi-path environment. We consider $L$ NLoS SPs, indexed by $\ell\in\{1,\dots,L\}$. 
Each SP $\ell$ can be characterized by three parameters. The first one is the ToA, denoted by $\tau_{\ell}$, which specifies the arrival time of SP $\ell$ at the receiver's reference point $\boldsymbol{p}_0$. The second and third ones are the elevation and azimuth AoAs, denoted by $\theta_{\ell}^{(0)}$ and $\phi_\ell^{(0)}$, respectively, defined as
\begin{align}
\theta_{\ell}^{(0)}
= \cos^{-1}\!\left(\frac{[\boldsymbol{b}_\ell]_3}{\lVert \boldsymbol{b}_\ell \rVert}\right),
\quad
\phi_{\ell}^{(0)}
= {\tan}^{-1}\left(\frac{[\boldsymbol{b}_\ell]_2}{[\boldsymbol{b}_\ell]_1}\right),
\label{eq:angles}
\end{align}
where $\boldsymbol{b}_{\ell}\in\mathbb{R}^{3\times 1}$ is the last reflection point before the arrival of SP $\ell$\footnote{The transmitter's location is not explicitly required since, in a rich multi-path environment, the ToA and AoA are primarily governed by the geometry between the receiver and surrounding reflectors. On the other hand, at higher carrier frequencies (e.g., mmWave/THz) where single-bounce paths become dominant, the current framework can be extended to localizing the transmitter or mapping nearby reflectors, as in \cite{Lotti2023} and \cite{Shahmansoori2018}.}, and $[\cdot]_i$ denotes the $i$-th element of a vector. The corresponding unit direction vector of SP $\ell$, denoted by $\boldsymbol{a}_{\ell}^{(0)}$, is given as 
\begin{align}\label{eq: Initial AoA direction}
\boldsymbol{a}_\ell^{(0)} = [\sin \theta_\ell^{(0)} \cos \phi_\ell^{(0)}, \sin \theta_\ell^{(0)} \sin \phi_\ell^{(0)}, \cos \theta_\ell^{(0)}]^\top.
\end{align}
Throughout the work, we aim to precisely and quickly estimate every SP's triplet parameters, namely, $\{\tau_\ell, \theta^{(0)}_{\ell}, \phi^{(0)}_{\ell}\}_{\ell=1}^L$. {To this end}, we make the following assumption:

\begin{table}[t]
\centering
\caption{Experimental Validation of Assumption \ref{Assumption of AoA Prior}: AoA Statistics in Regions of Interest using Sionna Ray Tracing.}
\label{tab:aoa_prior_stats}
\setlength{\tabcolsep}{1.0pt}
\renewcommand{\arraystretch}{1.05}
\footnotesize                     
\scalebox{1}{%
\begin{tabular}{c|c|cc|cc|c}
\hline
\multirow{2}{*}{\shortstack{Region of\\Interest}} 
& \multirow{2}{*}{\shortstack{Signal\\Path}}
& $\mu_{\ell}$ & $\sigma_{\ell}$ & $\xi_{\ell}$ & $\varsigma_{\ell}$
& \multirow{2}{*}{\shortstack{Correlation\\Coefficient}} \\
\cline{3-6}
& & \multicolumn{2}{c|}{Elevation} & \multicolumn{2}{c|}{Azimuth} & \\
\hline
\multirow{3}{*}{\shortstack{Florence Duomo\\(Florence)}}
& SP 1 & $100.8^\circ$  & $4.6^\circ$ & $62.7^\circ$  & $10.4^\circ$ & -0.19 \\
& SP 2 & $95.1^\circ$  & $0.9^\circ$ & $93.2^\circ$  & $2.0^\circ$ & 0.14 \\
& SP 3 & $101.1^\circ$  & $4.5^\circ$ & $113.7^\circ$  & $4.5^\circ$ & 0.17 \\
\hline
\multirow{3}{*}{\shortstack{Inha Aerospace Campus \\(Incheon)}}
& SP 1 & $94.9^\circ$ & $1.1^\circ$ & $36.8^\circ$   & $9.9^\circ$ & -0.06 \\
& SP 2 & $93.7^\circ$ & $0.6^\circ$ & $84.2^\circ$  & $5.8^\circ$ & -0.17 \\
& SP 3 & $96.0^\circ$ & $1.7^\circ$ & $141.8^\circ$  & $13.6^\circ$ & 0.06 \\
\hline
\end{tabular}%
} {\vspace{-5mm}}
\end{table}

\begin{assumption}[Prior AoA Statistics]\label{Assumption of AoA Prior} \emph{The last reflection point $\boldsymbol{b}_{\ell}$ of SP $\ell$ typically lies on the surfaces of surrounding objects 
(e.g., walls and pillars), which are known in advance from a coarse environment model as exemplified in Table~\ref{tab:aoa_prior_stats}. Consequently, the induced AoA uncertainty can be approximated by a weakly informative Gaussian prior:
\begin{align}
\label{eq:aoa_statistics}
\theta_\ell^{(0)}\sim \mathcal{N}(\mu_{\ell}, \sigma_{\ell}^2), \quad \phi_\ell^{(0)}\sim\mathcal{N}(\xi_\ell,\varsigma_\ell^2), \quad \theta_\ell^{(0)}\bot \phi_\ell^{(0)},
\end{align}
where the corresponding mean and standard-deviation pairs, say
$\{\mu_{\ell}, \sigma_{\ell}\}$ and $\{\xi_\ell, \varsigma_{\ell}\}$, are
known a priori. The AoAs are further assumed to be mutually
independent across SPs.} 
\end{assumption} 
Note that Assumption \ref{Assumption of AoA Prior} does not restrict our estimator to these mean directions. It merely introduces soft statistical regularization. In other words, all AoA values remain admissible, and the prior only influences the relative likelihood of angles.

\begin{figure}[t]
    \centering
    \begin{subfigure}[b]{0.48\linewidth}
        \centering
        \includegraphics[height=3.71cm]{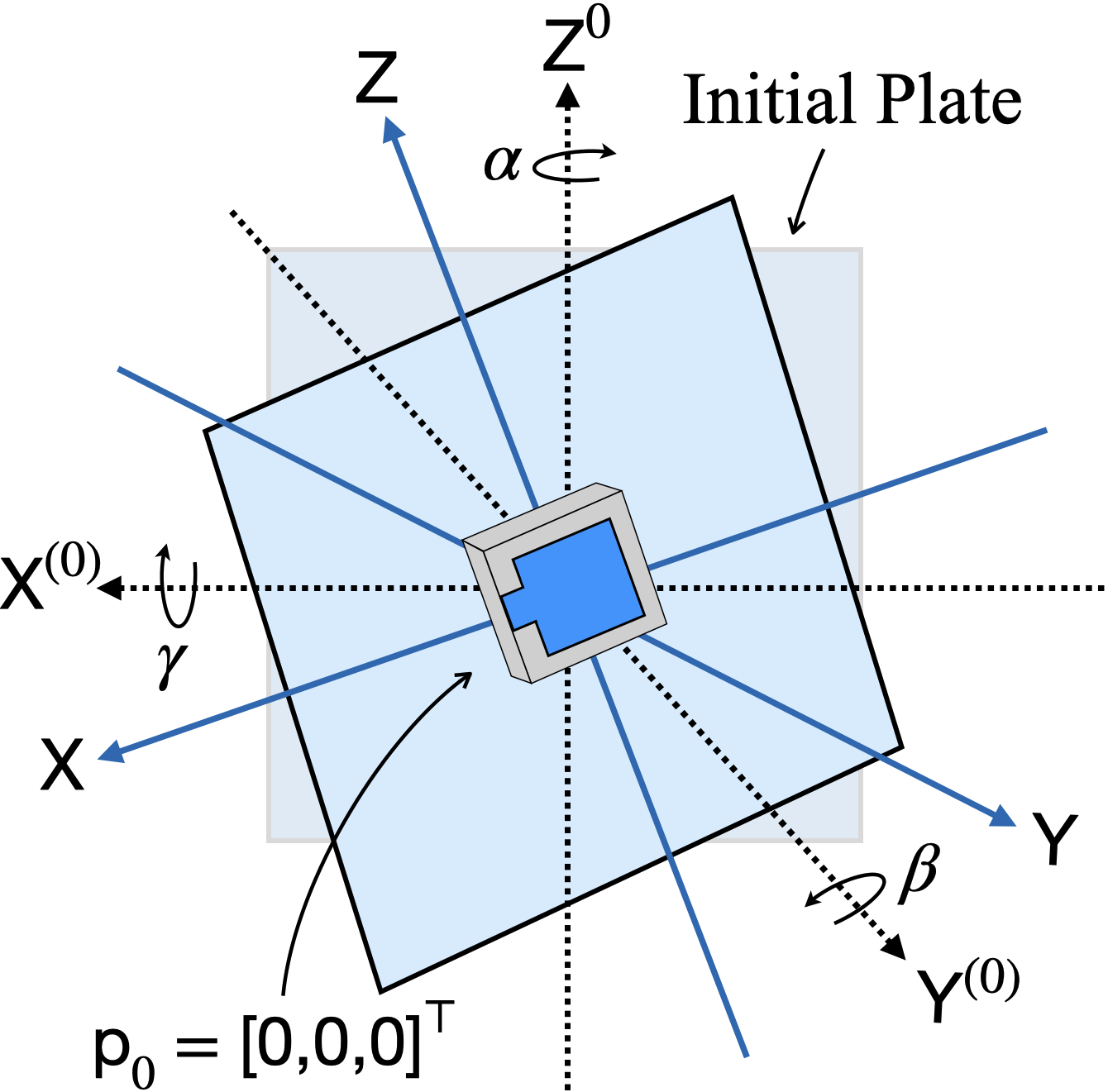}
        \caption{}
        \label{fig:coordinate_a}
    \end{subfigure}
    \hfill
    \begin{subfigure}[b]{0.48\linewidth}
        \centering
        \includegraphics[height=3.71cm]{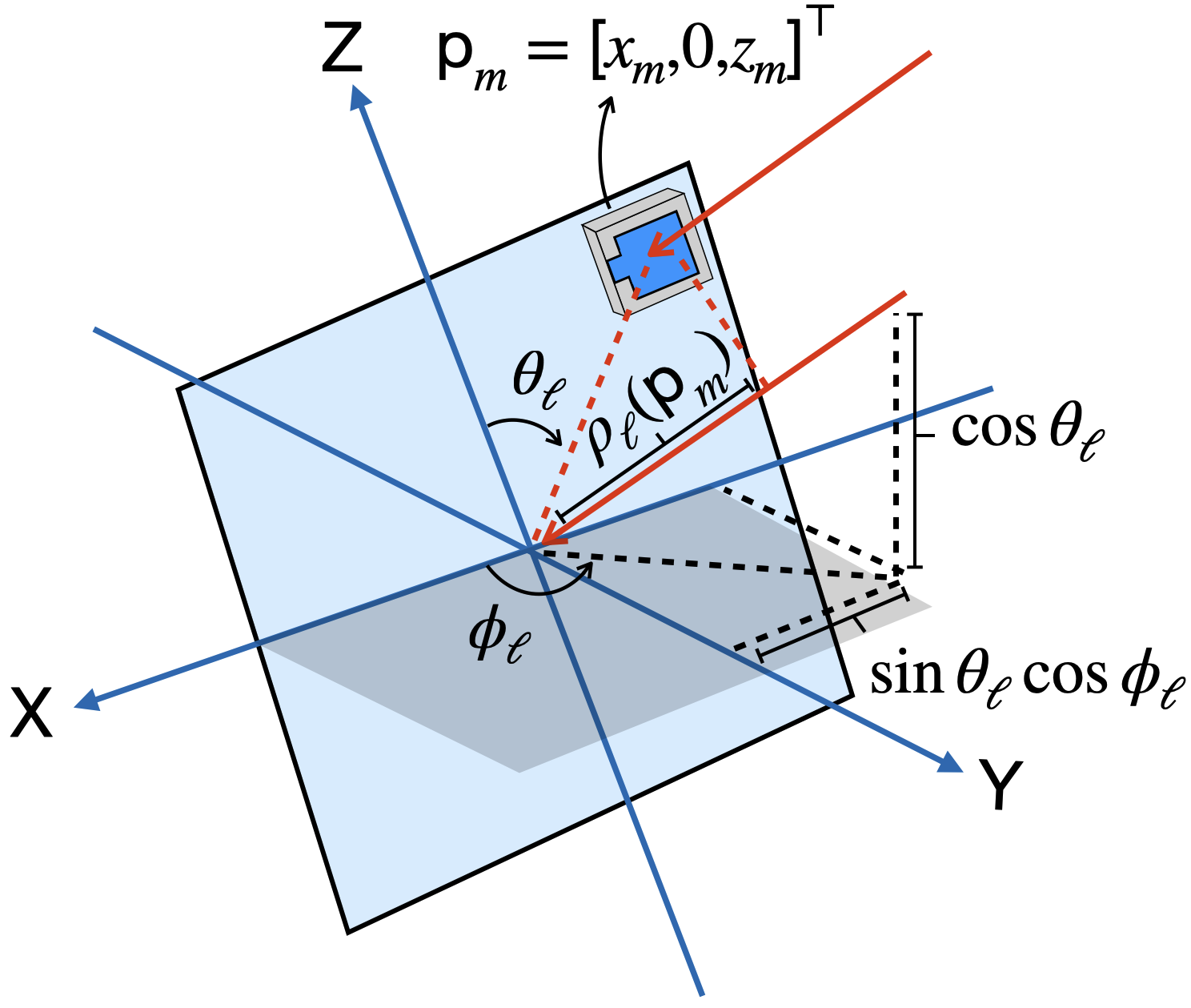}
        \caption{}
        \label{fig:coordinate_b}
    \end{subfigure}
    \caption{(a) The movable plate's orientation is configured by tilting angles
    ($\alpha, \beta, \gamma$) to define a rotated coordinate system.
    (b) The MA's movement to $\boldsymbol{p}_m$ on the movable plate induces
    an additional propagation distance
    $\rho_\ell(\boldsymbol{p}_m)$.\vspace{-15pt}}
    \label{fig:coordinate}
\end{figure}

\vspace{-10pt}
\subsection{Signal Model}\label{Sec: Signal Model}
\textit{1) Transmit Signal:}
We consider an \emph{orthogonal frequency-division multiplexing} (OFDM) system, where the transmitter sends the OFDM signals $\boldsymbol{x}(t) = [x^{(1)}(t), \dots, x^{(K)}(t)]^{\top} \in \mathbb{C}^{{K} \times {1}} $ through a set of ${K}$ subcarriers. With frequency spacing $\delta$ and frame duration $\mathrm{T} = {1/\delta}$, the $k$-th subcarrier's signal, denoted by $x^{(k)}(t)$, is given as
\begin{align}
x^{(k)}(t) = \sqrt{\frac{P}{K}}  e^{j 2 \pi f_k t}, \quad \ 0\le t < \mathrm{T},
\end{align}
where $P$ is the total transmit power and $f_k$ is the frequency of the $k$-th subcarrier, for $k=1, \dots, K$. These subcarriers are symmetrically centered around the carrier frequency $f_c$ with a spacing of $\delta$.

 \textit{2) Channel:}
The channel between the transmitter and the receiver is affected by (i) the movable plate's mechanical orientation and (ii) the MA's position within it. We describe these two factors as follows. 

The movable plate, initially aligned with the $\mathsf{X}^{(0)}$-$\mathsf{Z}^{(0)}$ plane, can be configured by applying tilting angles $\alpha$, $\beta$, and $\gamma$ about the $\mathsf{Z}^{(0)}$, $\mathsf{Y}^{(0)}$, and $\mathsf{X}^{(0)}$-axes, respectively. All tilting operations are defined with respect to the reference point $\boldsymbol{p}_0$, which serves as the rotation center. Once the tilting angles are specified, we define a rotated coordinate system ($\mathsf{X}$, $\mathsf{Y}$, $\mathsf{Z}$), in which the tilted plate lies on the $\mathsf{X}$-$\mathsf{Z}$ plane and its outward normal becomes $\mathsf{Y}$ (see Fig.~\ref{fig:coordinate}). From the perspective of the rotated plate, the arrival direction of SP $\ell$ in \eqref{eq: Initial AoA direction} appears as if the inverse rotation has been applied to the original direction vector, namely, 
\begin{align}\label{eq: revised AoA direction}
    {\boldsymbol{a}}_\ell =\boldsymbol{R}^{\top}(\alpha,\beta,\gamma)
    \boldsymbol{a}^{(0)}_\ell,
\end{align}
 where $\boldsymbol{R}(\alpha,\beta,\gamma)$ is the composite rotation matrix determined by the tilting angles \cite{Diebel2006}, given as
 \begin{align}
     \boldsymbol{R}&(\alpha,\beta,\gamma) = \boldsymbol{R}_z(\alpha) \boldsymbol{R}_y(\beta)\boldsymbol{R}_x(\gamma).
\end{align}
This matrix is orthogonal, and its inverse equals its transpose. Analogous to the definition of $\{\theta_\ell^{(0)}, \phi^{(0)}_\ell\}$ from $\boldsymbol{a}^{(0)}_{\ell}$ in \eqref{eq: Initial AoA direction}, the elevation and azimuth angles observed from the rotated plate, denoted by ${\theta}_\ell$ and ${\phi}_\ell$, are obtained from ${\boldsymbol{a}}_{\ell}$ in \eqref{eq: revised AoA direction}, given as
\begin{align}
    {\theta}_\ell
    = \cos^{-1}\!\left( \frac{[{\boldsymbol{a}}_\ell]_3}{\| \boldsymbol{a}_\ell\|}\right), 
    \quad
    {\phi}_\ell
    = {\tan^{-1}}\!\left(\frac{[{\boldsymbol{a}}_\ell]_2}{[{\boldsymbol{a}}_\ell]_1}\right). \label{eq:recover}
\end{align}   

Second, to model the effect of the MA's position, we consider that the MA moves from the reference point $\boldsymbol{p}_0$ to a new position $\boldsymbol{p}_m$. 
In the rotated coordinate system, this position is expressed as ${\boldsymbol{p}}_m=[{x}_m,0,{z}_m]^{\top}$, where the ${\mathsf{Y}}$-coordinate is zero because the MA is constrained to lie on the movable plate. We assume that the coherence time is longer than the MA movement duration, so that the channel parameters remain unchanged. We define the channel vector of SP $\ell$ at position $\boldsymbol{p}_m$ as $\boldsymbol{h}_\ell(\boldsymbol{p}_m) = [h_\ell^{(1)}(\boldsymbol{p}_m), \dots , h_\ell^{(K)}(\boldsymbol{p}_m) ] \in \mathbb{C}^{1 \times K}$. 
The element $h_{\ell}^{(k)}(\boldsymbol{p}_m)$ denotes the channel response of SP $\ell$ on the $k$-th subcarrier, given by  
\begin{align}
h_{\ell}^{(k)}(\boldsymbol{p}_m) = \alpha_\ell \, \exp \left( -j{2\pi}f_k\left(\tau_\ell+\frac{\rho_{\ell}(\boldsymbol{p}_m)}{c}\right) \right),
\end{align}
where $\alpha_\ell$ represents the attenuation factor reflecting path-loss and $c\approx 3\times 10^8$ (m/s) is the speed of light. The term $\rho_\ell(\boldsymbol{p}_m)$ represents the additional propagation distance of SP $\ell$ as the MA moves from $\boldsymbol{p}_0$ to $\boldsymbol{p}_m$, equivalent to the scalar projection of $\boldsymbol{p}_m$ onto the unit direction ${\boldsymbol{a}}_\ell$: 
\begin{align}\label{eq: propagation dist}
\rho_\ell(\boldsymbol{p}_m)
&= {\boldsymbol{a}}_\ell^\top {\boldsymbol{p}}_m 
= {x}_m \sin {\theta}_\ell \cos {\phi}_\ell
 + {z}_m \cos {\theta}_\ell.
\end{align}
Consequently, the overall channel at location $\boldsymbol{p}_m$ on the $k$-th subcarrier is
\begin{align}\label{eq: Overall channel}
h^{(k)}(\boldsymbol{p}_m)
= \sum_{\ell=1}^L \alpha_\ell
   \exp\!\left(
      -j 2\pi f_k \left(
         \tau_\ell + \frac{\rho_\ell(\boldsymbol{p}_m)}{c}
      \right)
   \right).
\end{align}
It is observed from \eqref{eq: Overall channel} that all estimation parameters are embedded in the phase terms. Their accurate estimation is viable when the phase contributions of different SPs are sufficiently distinct. The additional propagation distances $\{\rho_\ell\}$ in \eqref{eq: propagation dist} are controllable through the orientation of the movable plate as well as the MA's movement. This controllability enables us to enlarge inter-phase differences, as illustrated in the following example.     
\vspace{-5pt}
\begin{remark}[Effect of Plate Configuration]\label{Remark1} \emph{Consider an example with two SPs (\(L=2\)), whose elevation-azimuth AoAs are 
$(\theta_1^{(0)},\phi_1^{(0)})= (60^\circ,\,75^\circ)$ and
$(\theta_2^{(0)},\phi_2^{(0)})= (75^\circ,\,75^\circ)$. Suppose the MA moves a distance $d$ along the ${\mathsf{X}}$-axis, i.e., ${\boldsymbol{p}}_m=[d,0,0]^\top$. We compare the resulting differences in the additional propagation distance $\rho_\ell(\boldsymbol{p}_m)$ under two plate configurations: 
\begin{itemize}[leftmargin=*]
\item \textbf{Without plate tilting} ($\alpha=\beta=\gamma=0$): The difference between $\rho_1$ and $\rho_2$ in \eqref{eq: propagation dist} is
$|\rho_1(\boldsymbol{p}_m)-\rho_2(\boldsymbol{p}_m)|\approx 0.0259\times d
$.
\item \textbf{With plate tilting} configured as $(\alpha,\beta,\gamma)=(0^\circ,90^\circ,45^\circ)$: 
The above difference increases substantially to
$|\rho_1(\boldsymbol{p}_m)-\rho_2(\boldsymbol{p}_m)|\approx  0.2412\times d$.
\end{itemize}
This example shows the effect of the movable plate's orientation on multi-path resolvability. By appropriately configuring the tilting angles, the MA induces much larger differences in the projected path lengths, thereby significantly enhancing inter-path phase separation. As a result, even SPs with very similar AoAs and ToAs can be resolved more~effectively.}
\end{remark}

 \textit{3) Receive Signal:}
The received signal on the $k$-th subcarrier at time $t$, when the MA is at position $\boldsymbol{p}_m$, denoted by $y^{(k)}(t;\boldsymbol{p}_m)$, is given as 
\begin{align}
{y}^{(k)}\!(t;\boldsymbol{p}_m)\!=\!\! \sum_{\ell =1}^L \alpha_\ell 
   e^{\!-j2\pi f_k\left(
      \tau_\ell 
      + \frac{\rho_\ell(\boldsymbol{p}_m)}{c}
     \right)} 
   x^{(k)}\!(t) \!+\! w^{(k)}\!(t),
\end{align}
where $w^{(k)}(t) \in \mathbb{C}$ is the thermal noise on the $k$-th subcarrier, modeled as a circularly symmetric complex Gaussian white noise process, with independent samples following $\mathcal{CN}(0, N_0)$. Denote the demodulated received signal vector at position $\boldsymbol{p}_m$ as $\boldsymbol{s}(\boldsymbol{p}_m) = [s^{(1)}(\boldsymbol{p}_m), \dots ,s^{(K)}(\boldsymbol{p}_m)]^\top$, where the $k$-th element corresponds to the demodulated signal on the $k$-th subcarrier, obtained as
\begin{align}
{s}^{(k)}(\boldsymbol{p}_m) &= \frac{1}{\mathrm{T}} \int_{0}^{\mathrm{T}} y^{(k)}(t;\boldsymbol{p}_m) \cdot x^{(k)}(t)^* dt \nonumber \\
&= \frac{P}{K} \left ( \sum_{\ell =1}^L \alpha_\ell e^{-j2\pi f_k( \tau_\ell + {  \frac{\rho_\ell(\boldsymbol{p}_m)}{c} })} \right ) + \bar{w}^{(k)}, \label{eq:received_signal}
\end{align}
where $\bar{w}^{(k)}$ is the demodulated noise component following $\mathcal{CN}(0, PN_0/K)$.

\vspace{-10pt}
\subsection{Multi-Path Sensing: Two-Step Procedure}
\vspace{-2pt}
To estimate all SP parameters, namely the elevation-azimuth AoAs $\{\theta^{(0)}_{\ell}, \phi^{(0)}_{\ell}\}$ and the ToAs $\{\tau_\ell \}$, we will adopt the following two-step procedure:

\begin{enumerate}[leftmargin=*]
\item \textbf{Movable-Plate Orientation Control}: We first configure the movable plate's orientation $(\alpha, \beta,\gamma)$ to keep every SP visible while enhancing their separability, guided by the prior AoA statistics in Assumption \ref{Assumption of AoA Prior}. Once the orientation is determined, it remains fixed to avoid excessive tilting overhead \cite{Shao2025}. Under this configuration, the MA performs two linear scans along the $\mathsf{X}$- and $\mathsf{Z}$-axes of the rotated plate. The detailed explanation of this step is provided in Sec.~\ref{sec:Movable Plate Orientation Control}.

\item \textbf{SP Parameter Estimation}: Given the two linear scan measurements, we first obtain the estimates of the local AoAs, denoted by $\{\hat\theta_{\ell}, \hat\phi_{\ell}\}$. The corresponding global (initial-frame) AoAs $\{\hat\theta^{(0)}_{\ell}, \hat\phi^{(0)}_{\ell}\}$ are then recovered by applying the rotation transformation to the local estimates, according to the configured tilting angles. Finally, leveraging the recovered AoAs, we estimate each SP's ToA $\hat\tau_\ell$. The detailed explanation of this step is provided in Sec.~\ref{sec:SPsensing}.
\end{enumerate}

\section{Optimal Movable-Plate Orientation Control}\label{sec:Movable Plate Orientation Control}
This section focuses on controlling the movable plate's orientation to enable accurate estimation of each SP's AoA, characterized by its elevation and azimuth. To this end, we first analyze its effect on AoA estimation from a Fisher information perspective. Guided by the resulting insights, we formulate an optimal control problem for plate-orientation design and develop a tractable solution algorithm. 

\subsection{Fisher Information Analysis and Design Guidelines}\label{sec:FIM}
In this subsection, we derive the \emph{Fisher information matrix} (FIM) for the AoA estimation problem, which will serve as a guideline for the subsequent optimization problem. 

\textit{1) Fisher Information Derivation}: Let the unknown AoA parameter vector be formed by stacking the AoAs of all $L$ SPs as $\boldsymbol{\psi} = [\boldsymbol{\psi}_{1}^\top, \dots, \boldsymbol{\psi}_{L}^\top]^\top \in \mathbb{R}^{2L \times 1}$, where $\boldsymbol{\psi}_{\ell}=[\theta^{(0)}_{\ell}, \phi_{\ell}^{(0)}]^\top$ collects the elevation and azimuth angles of SP $\ell$ specified in \eqref{eq:angles}. Next, we model the MA as sequentially visiting $M$ arbitrary measurement positions $\{\boldsymbol{p}_m\}_{m=1}^M$ on the plate. 
The received signals at the $M$ positions, i.e., $\boldsymbol{s}(\boldsymbol{p}_1),\boldsymbol{s}(\boldsymbol{p}_2),\dots,\boldsymbol{s}(\boldsymbol{p}_M)$  in \eqref{eq:received_signal}, are stacked into the observation vector $\boldsymbol{s} \in \mathbb{C}^{MK \times 1}$ as
\begin{align}
    \boldsymbol{s} = [\boldsymbol{s}(\boldsymbol{p}_1)^\top, \boldsymbol{s}(\boldsymbol{p}_2)^\top, \dots, \boldsymbol{s}(\boldsymbol{p}_M)^\top]^\top.
\end{align}
Recalling the Gaussian model in \eqref{eq:received_signal}, the likelihood function $p(\boldsymbol{s}|\boldsymbol{\psi})$ is given by
\begin{align}
    p(\boldsymbol{s}|\boldsymbol{\psi}) = \frac{\exp\left( - \frac{K}{PN_0} \left\| \boldsymbol{s} - \boldsymbol{\mu}(\boldsymbol{\psi}) \right\|^2 \right)}{(\pi (P N_0/K))^{MK}} ,
\end{align}
where the mean signal vector $\boldsymbol{\mu}(\boldsymbol{\psi})$ is obtained by removing the noise term in \eqref{eq:received_signal}. The resultant FIM $\boldsymbol{I}(\boldsymbol{\psi})\in\mathbb{R}^{2L\times 2L}$ can be expressed as
\begin{align}\label{eq: Gamma}
    \boldsymbol{I}(\boldsymbol{\psi})
    = \frac{2K}{PN_0} \Re \left\{
    \begin{bmatrix}
    \boldsymbol{\Gamma}_{1,1} & \boldsymbol{\Gamma}_{1,2} & \dots & \boldsymbol{\Gamma}_{1,L} \\
    \boldsymbol{\Gamma}_{2,1} & \boldsymbol{\Gamma}_{2,2} & \dots & \boldsymbol{\Gamma}_{2,L} \\
    \vdots & \vdots & \ddots & \vdots \\
    \boldsymbol{\Gamma}_{L,1} & \boldsymbol{\Gamma}_{L,2} & \dots & \boldsymbol{\Gamma}_{L,L}
    \end{bmatrix}
    \right\},
\end{align}
where the $(\ell,u)$-th block $\boldsymbol{\Gamma}_{\ell,u} \in \mathbb{C}^{2 \times 2}$,  $\ell, u \in \{1, \dots, L\}$, is given by \begin{align}
\boldsymbol{\Gamma}_{\ell,u} &= \left(\frac{\partial \boldsymbol{\mu}(\boldsymbol{\psi})}{\partial \boldsymbol{\psi}_\ell}\right)^H \left(\frac{\partial \boldsymbol{\mu}(\boldsymbol{\psi})}{\partial \boldsymbol{\psi}_u}\right)\nonumber\\
&=
\begin{cases}
{c}_\ell \sum_{m=1}^{M} \boldsymbol{g}_{\ell}(\boldsymbol{p}_m)\boldsymbol{g}_{\ell}(\boldsymbol{p}_m)^{\top}, & \ell=u,\\
\sum_{m=1}^{M} {\kappa}_{\ell,u}(\boldsymbol{p}_m)\boldsymbol{g}_{\ell}(\boldsymbol{p}_m)\boldsymbol{g}_{u}(\boldsymbol{p}_m)^{\top}, & \ell\neq u,
\end{cases}
\end{align}
with the AoA-gradient vector $\boldsymbol{g}_{\ell}(\boldsymbol{p}_m)\triangleq
\left[\frac{\partial \rho_\ell(\boldsymbol{p}_m)}{\partial \theta_\ell^{(0)}}, \frac{\partial \rho_\ell(\boldsymbol{p}_m)}{\partial \phi_\ell^{(0)}}\right]^\top$ which represents the sensitivity of $\rho_\ell(\boldsymbol{p}_m)$ of \eqref{eq: propagation dist} with respect to $\boldsymbol{\psi}_\ell$. The diagonal-block scaling factor ${c}_\ell \triangleq |\alpha_\ell|^2 \sum_{k=1}^K \left( \frac{2\pi f_k}{c}\frac{P}{K} \right)^2$ is independent of the MA's position $\boldsymbol{p}_m$. On the other hand, the off-diagonal scaling factor ${\kappa}_{\ell,u}(\boldsymbol{p}_m)$ depends on $\boldsymbol{p}_m$ and is given by
\begin{align}\label{eq: kappa}
&\kappa_{\ell,u}(\boldsymbol{p}_m) 
\nonumber\\\triangleq& \alpha_\ell^* \alpha_u \left( \frac{P}{K} \right)^2\!\! \sum_{k=1}^K \!\!\left(\frac{2\pi f_k}{c} \right)^2 \!\!\! 
\exp\left(j{2\pi f_k \Delta_{\ell,u}(\boldsymbol{p}_m)}\right),
\end{align}
where 
\begin{align}\label{eq: Delta}
\Delta_{\ell,u}(\boldsymbol{p}_m)=\tau_\ell - \tau_u
+ \frac{\rho_\ell(\boldsymbol{p}_m) - \rho_u(\boldsymbol{p}_m)}{c},
\end{align}
which determines the phase difference between the two SPs.

\textit{2) Design Guideline}: It is well known that the estimation error covariance is lower-bounded by the inverse of the FIM, i.e., 
$\mathsf{cov}(\hat{\boldsymbol{\psi}})\succeq \boldsymbol{I}(\boldsymbol{\psi})^{-1}$. Taking determinants on both sides yields the volume bound, given as 
\begin{align}
\mathsf{det}(\mathsf{cov}(\hat{\boldsymbol{\psi}}))\geq \frac{1}{\mathsf{det}(\boldsymbol{I}(\boldsymbol{\psi}))},
\end{align}
which motivates the maximization of $\mathsf{det}(\boldsymbol{I}(\boldsymbol{\psi}))$. In view of the block structure of $\boldsymbol{I}(\boldsymbol\psi)$ specified in \eqref{eq: Gamma}, two strategies arise: (i) strengthening the diagonal blocks and/or (ii) suppressing the off-diagonal coupling blocks. We discuss these two directions as follows. \begin{itemize}[leftmargin=*]
\item \textbf{Effect of increasing the diagonal blocks.} The diagonal block $\boldsymbol\Gamma_{\ell,\ell}$ is determined by the scalar coefficient $c_\ell$ and the AoA-gradient vectors $\boldsymbol{g}_\ell(\boldsymbol{p}_m)$. Since $c_\ell$ is independent of the movable plate orientation, the only controllable factor is the magnitude of $\boldsymbol{g}_\ell(\boldsymbol{p}_m)$. However, enlarging $\boldsymbol{g}_\ell(\boldsymbol{p}_m)$ also amplifies the cross terms that appear in the off-diagonal block. It is thus concluded that simply increasing the AoA-gradient magnitudes does not necessarily increase $\mathsf{det}(\boldsymbol{I}(\boldsymbol{\psi}))$ because the gain in $\boldsymbol\Gamma_{\ell,\ell}$ may be offset by strengthened inter-parameter coupling. 
\item \textbf{Effect of decreasing the off-diagonal blocks.} For $\ell\neq u$, the off-diagonal block $\boldsymbol\Gamma_{\ell,u}$ is proportional to $\kappa_{\ell,u}(\boldsymbol{p}_m)$ in \eqref{eq: kappa}. Under the narrowband approximation, it becomes 
\begin{align}
\kappa_{\ell,u}(\boldsymbol{p}_m)&\approx \alpha_\ell^* \alpha_u \left( \frac{P}{K} \right)^2 \left( \frac{2\pi f_c}{c} \right)^2 \times \nonumber\\
&\sum_{k=1}^K \exp\left( j {2\pi f_k} \Delta_{\ell,u}(\boldsymbol{p}_m)  \right).
\end{align}
The harmonic sum admits a closed form of the Dirichlet kernel, given as
\begin{align}
&\sum_{k=1}^K \exp\left( j {2\pi f_k} \Delta_{\ell,u}(\boldsymbol{p}_m)  \right)\nonumber\\
=&\exp(j 2\pi f_c \Delta_{\ell,u}(\boldsymbol{p}_m))D_K(\Delta_{\ell,u}(\boldsymbol{p}_m)), \label{eq:dirichlet}
\end{align}
where $D_K(x)\triangleq\frac{\sin \left(K \pi \delta x \right)}{\sin \left(\pi \delta x \right)}$ and the subcarrier spacing $\delta$ is specified in Sec.~\ref{Sec: Signal Model}. Since $|D_K(x)|$ decreases as $|x|$ moves away from zero (outside its main lobe), the magnitude of $\kappa_{\ell,u}(\boldsymbol{p}_m)$ and thus $\boldsymbol{\Gamma}_{\ell,u}$ are reduced when $|\Delta_{\ell,u}(\boldsymbol{p}_m)|$ becomes larger. As indicated in \eqref{eq: Delta}, $|\Delta_{\ell,u}(\boldsymbol{p}_m)|$ increases with $|\rho_\ell(\boldsymbol{p}_m) - \rho_u(\boldsymbol{p}_m)|$, which is consistent with the intuition in Remark \ref{Remark1}: enlarging the path-length separation suppresses the inter-path coupling even though their propagation delays are similar and improves $\mathsf{det}(\boldsymbol{I}(\boldsymbol{\psi}))$. 
\end{itemize}

\subsection{Stochastic Optimization of Movable-Plate Control} \label{sec:opt}
As noted in Remark \ref{Remark1}, the optimal plate orientation depends on the instantaneous AoA realization of each SP, which is generally unavailable in practice. We therefore adopt two linear MA scans along the $\mathsf{X}$- and $\mathsf{Z}$-axes with uniform step size $d$, namely, 
\begin{align}\label{eq:two_scans}  
\boldsymbol{p}_m^{(1)} \!=\! (m-1)d\mathbf{i}, \ 
\boldsymbol{p}_m^{(2)} \!=\! (m-1)d\mathbf{k}, \  m\!=\!1,\dots,M,
\end{align}
where $\mathbf{i}=[1,0,0]^\top$ and $\mathbf{k}=[0,0,1]^\top$ are standard basis vectors along the $\mathsf{X}$- and $\mathsf{Z}$-axes, respectively. Collectively, these two scan lines traverse the plate in orthogonal directions, providing broad spatial coverage with respect to the AoA.

Given the plate orientation $\boldsymbol\varphi=[\alpha,\beta,\gamma]$, the propagation distance variations of SP $\ell$ induced by the MA movements $\boldsymbol{p}_m^{(1)}$ and $\boldsymbol{p}_m^{(2)}$ are 
\begin{align}\label{eq:scan}
\rho_{\ell}(\boldsymbol{p}_m^{(1)})&=(m-1) d  \rho_{\ell}(\mathbf{i})=(m-1) d  [\boldsymbol{R}^{\top}(\boldsymbol{\varphi})
    \boldsymbol{a}^{(0)}_\ell]_1  ,\nonumber\\
\rho_{\ell}(\boldsymbol{p}_m^{(2)})&=(m-1) d \rho_{\ell}(\mathbf{k})=(m-1) d  [\boldsymbol{R}^{\top}(\boldsymbol{\varphi})
    \boldsymbol{a}^{(0)}_\ell]_3.
\end{align}
Both quantities scale linearly with the travel distance $(m-1)d$. Accordingly, $d$ is selected up to its permissible limit dictated by mechanical constraints and the physical size of the plate, while keeping the spacing within half the carrier wavelength to avoid spatial aliasing.

On the other hand, the remaining terms $\rho_{\ell}(\mathbf{i})$ and $\rho_{\ell}(\mathbf{k})$ are the scalar projections of the rotated arrival vector $\boldsymbol{R}^{\top}(\boldsymbol{\varphi})\boldsymbol{a}^{(0)}_\ell$ onto the unit directions $\mathbf{i}$ ($\mathsf{X}$-axis) and $\mathbf{k}$ ($\mathsf{Z}$-axis), respectively. 
Equivalently, 
\begin{align}
\rho_{\ell}(\mathbf{i})=\mathbf{i}^\top\boldsymbol{R}^{\top}(\boldsymbol{\varphi})\boldsymbol{a}^{(0)}_\ell,\quad
\rho_{\ell}(\mathbf{k})=\mathbf{k}^\top\boldsymbol{R}^{\top}(\boldsymbol{\varphi})\boldsymbol{a}^{(0)}_\ell.
\end{align}
Since $\boldsymbol{\varphi}$ affects the propagation-distance variation through these two projections, optimizing $\boldsymbol{\varphi}$ amounts to shaping $\rho_{\ell}(\mathbf{i})$ and $\rho_{\ell}(\mathbf{k})$.
We therefore focus on \(\rho_{\ell}(\mathbf{i})\) and \(\rho_{\ell}(\mathbf{k})\) as the key quantities governed by $\boldsymbol{\varphi}$. For later use, we also define the projection onto the unit direction $\mathbf{j}$ ($\mathsf{Y}$-axis), obtained by replacing $\mathbf{i}$ or $\mathbf{k}$ with $\mathbf{j}=[0,1,0]^\top$.

Since \(\boldsymbol{a}_{\ell}^{(0)}\) is random under the prior in Assumption~\ref{Assumption of AoA Prior}, the projections \(\rho_\ell(\mathbf{i})\), \(\rho_\ell(\mathbf{k})\), and \(\rho_\ell(\mathbf{j})\) are also random. To characterize this randomness, we derive their first and second moments in the following lemma.  

\begin{lemma}[First and Second Moments of the Projection]\label{lemma1} \emph{Given the Gaussian prior of elevation and azimuth angles stated in Assumption~\ref{Assumption of AoA Prior}, namely $\theta_\ell^{(0)}\sim\mathcal{N}(\mu_{\ell},\sigma_{\ell}^2)$ and
$\phi_\ell^{(0)}\sim\mathcal{N}(\xi_\ell,\varsigma_\ell^2)$, define
\begin{align}\label{eq:moment_def}
\bar\rho_{\ell}^{(1)} &= \mathsf{E}[\rho_{\ell}(\mathbf{i})], \quad
\bar\rho_{\ell}^{(2)} = \mathsf{E}[\rho_{\ell}(\mathbf{k})], \quad
\bar\rho_{\ell}^{(3)} = \mathsf{E}[\rho_{\ell}(\mathbf{j})], \nonumber\\
\nu_{\ell}^{(1)} &= \mathsf{E}[\rho_{\ell}(\mathbf{i})^2], \ \
\nu_{\ell}^{(2)} = \mathsf{E}[\rho_{\ell}(\mathbf{k})^2], \
\nu_{\ell}^{(3)} = \mathsf{E}[\rho_{\ell}(\mathbf{j})^2].
\end{align}
Because \(\theta_\ell^{(0)}\) and \(\phi_\ell^{(0)}\) are independent Gaussian random variables, these moments admit closed-form expressions via their characteristic functions, with the details omitted.}
\end{lemma}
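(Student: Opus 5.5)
Since $\boldsymbol{R}(\boldsymbol{\varphi})$ is deterministic given the orientation, each projection is a fixed linear combination of the entries of $\boldsymbol{a}_\ell^{(0)}$. We write $\boldsymbol{r}_1=\boldsymbol{R}(\boldsymbol{\varphi})\mathbf{i}$, $\boldsymbol{r}_2=\boldsymbol{R}(\boldsymbol{\varphi})\mathbf{k}$, and $\boldsymbol{r}_3=\boldsymbol{R}(\boldsymbol{\varphi})\mathbf{j}$, i.e., the columns of $\boldsymbol{R}$. Then $\rho_\ell(\mathbf{i})=\boldsymbol{r}_1^\top\boldsymbol{a}_\ell^{(0)}$, and similarly for the other two. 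By linearity,
\begin{align}
\bar\rho_\ell^{(q)}=\boldsymbol{r}_q^\top\mathsf{E}[\boldsymbol{a}_\ell^{(0)}],\qquad
\nu_\ell^{(q)}=\boldsymbol{r}_q^\top\mathsf{E}[\boldsymbol{a}_\ell^{(0)}\boldsymbol{a}_\ell^{(0)\top}]\boldsymbol{r}_q ,\nonumber
\end{align}
for $q\in\{1,2,3\}$. The task therefore reduces to computing the mean vector and the second-moment matrix of the unit vector in \eqref{eq: Initial AoA direction}. These are orientation-independent and can be computed once per SP.

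For the mean, each entry of $\boldsymbol{a}_\ell^{(0)}$ is a product of a function of $\theta_\ell^{(0)}$ and a function of $\phi_\ell^{(0)}$. Independence factorizes the expectation. The characteristic function of $X\sim\mathcal{N}(m,s^2)$ gives $\mathsf{E}[e^{jnX}]=e^{jnm-n^2s^2/2}$, so
\begin{align}
\mathsf{E}[\cos(nX)]=e^{-n^2s^2/2}\cos(nm),\qquad \mathsf{E}[\sin(nX)]=e^{-n^2s^2/2}\sin(nm).\nonumber
\end{align}
With $n=1$ this yields $\mathsf{E}[\sin\theta_\ell^{(0)}\cos\phi_\ell^{(0)}]=e^{-(\sigma_\ell^2+\varsigma_\ell^2)/2}\sin\mu_\ell\cos\xi_\ell$, and analogously for the other entries.

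For the second-moment matrix, every entry is a product of two trigonometric factors in $\theta_\ell^{(0)}$ times two in $\phi_\ell^{(0)}$. We linearize with double-angle identities, e.g., $\sin^2\theta=(1-\cos2\theta)/2$, $\sin\theta\cos\theta=\tfrac12\sin2\theta$, and $\cos\phi\sin\phi=\tfrac12\sin2\phi$. We then apply the same formula with $n=2$, which produces damping factors $e^{-2\sigma_\ell^2}$ and $e^{-2\varsigma_\ell^2}$. Independence again splits each entry into a $\theta$-part and a $\phi$-part.

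The only real obstacle is bookkeeping: the six distinct entries of the symmetric matrix $\mathsf{E}[\boldsymbol{a}\boldsymbol{a}^\top]$ must be expanded correctly, and the quadratic forms assembled with the explicit columns of $\boldsymbol{R}_z(\alpha)\boldsymbol{R}_y(\beta)\boldsymbol{R}_x(\gamma)$. As sanity checks, the trace of $\mathsf{E}[\boldsymbol{a}\boldsymbol{a}^\top]$ must equal $1$, since $\|\boldsymbol{a}\|=1$. Consequently $\nu_\ell^{(1)}+\nu_\ell^{(2)}+\nu_\ell^{(3)}=1$ for every orientation. In addition, the expressions must reduce to the deterministic values when $\sigma_\ell,\varsigma_\ell\to0$.
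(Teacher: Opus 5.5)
Your proposal is correct and follows the route the paper indicates: independence of $\theta_\ell^{(0)}$ and $\phi_\ell^{(0)}$ plus the Gaussian characteristic function, organized through $\mathsf{E}[\boldsymbol{a}_\ell^{(0)}]$, $\mathsf{E}[\boldsymbol{a}_\ell^{(0)}\boldsymbol{a}_\ell^{(0)\top}]$ and the columns of $\boldsymbol{R}(\boldsymbol{\varphi})$, and your check $\nu_\ell^{(1)}+\nu_\ell^{(2)}+\nu_\ell^{(3)}=1$ is valid. One small bookkeeping correction: entries involving $\cos\theta_\ell^{(0)}$ do not carry two $\phi$-factors, so $\mathsf{E}[\sin\theta_\ell^{(0)}\cos\theta_\ell^{(0)}\cos\phi_\ell^{(0)}]=\tfrac12 e^{-2\sigma_\ell^2}\sin(2\mu_\ell)\,e^{-\varsigma_\ell^2/2}\cos\xi_\ell$ uses $n=1$ in $\phi$, and $\mathsf{E}[\cos^2\theta_\ell^{(0)}]$ has no $\phi$-damping at all, but your same formula covers these cases.
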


With the closed-form moments of Lemma~\ref{lemma1}, we design the objective and constraints of the optimization problem introduced in the sequel.
\subsubsection{Objective Function}
The projections are random under the prior, so the separation between SPs should hold reliably across realizations rather than only on average. Accordingly, different SPs should have well-separated mean projections, while the dispersion of each projection remains small. Since several $\boldsymbol\varphi$-dependent terms such as $\sin\beta\cos\gamma\cos(\alpha-\xi_\ell)$ appear in both the first and second moments, the mean separation and dispersion cannot be controlled independently. As a result, even when the mean projections are well separated, the dispersion of some SP pairs can become comparable to, or even larger than, their mean gap, so that their projections overlap in individual realizations. In such cases, the realized ordering of two SP projections may become opposite to their mean ordering, which we refer to as an \emph{order-reversal} event.

\begin{figure}[t]
  \centering
  \includegraphics[width=0.9\linewidth]{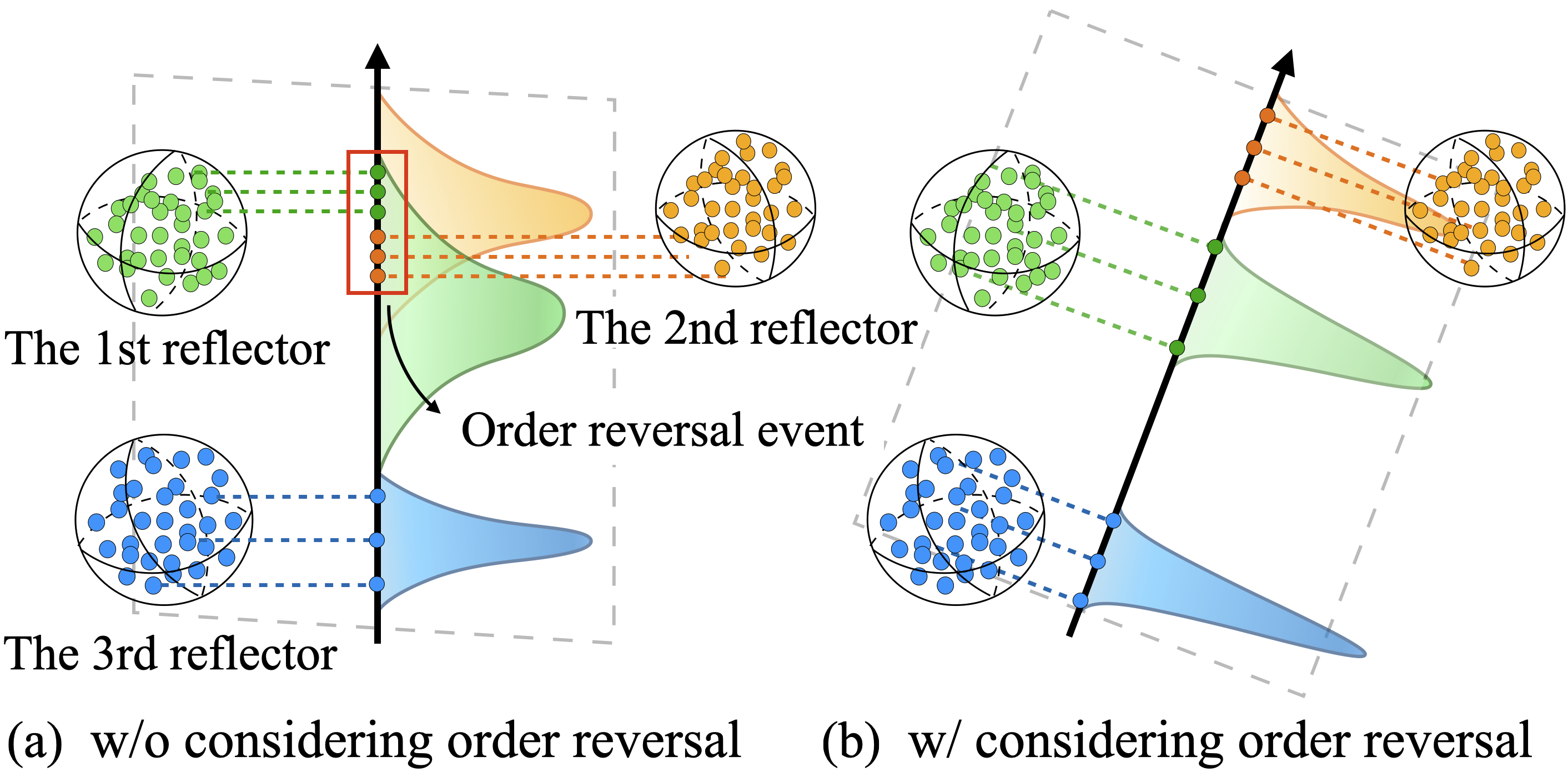}
  \caption{The effect of MA control on order-reversal.}\vspace{-15pt}
  \label{fig:epsilon_separation}
\end{figure}

\begin{definition}[Order-Reversal Probability]\label{Definition1}
\emph{Consider two SPs $\ell<u$. The probability of order-reversal on
the $\mathsf{X}$-axis is
\begin{align}
A_{\ell,u}^{(1)}\triangleq\mathsf{Pr}\!\left[\big(\rho_{\ell}(\mathbf{i})-\rho_{u}(\mathbf{i})\big)
\big(\bar\rho_{\ell}^{(1)}-\bar\rho_{u}^{(1)}\big)\le 0\right].
\end{align}
Similarly, the order-reversal probability on the $\mathsf{Z}$-axis is
\begin{align}
A_{\ell,u}^{(2)}\triangleq\mathsf{Pr}\!\left[\big(\rho_{\ell}(\mathbf{k})-\rho_{u}(\mathbf{k})\big)
\big(\bar\rho_{\ell}^{(2)}-\bar\rho_{u}^{(2)}\big)\le 0\right].
\end{align}}
\end{definition}
Fig.~\ref{fig:epsilon_separation} graphically illustrates a case where two SPs exhibit a large order-reversal probability, even though their mean projections are well separated. This occurs when the associated dispersions are large, so that the two projections frequently cross each other and thereby degrade the inter-path discriminability. To keep the SPs distinguishable, the order-reversal probability must therefore be controlled. Prompted by the definition, we minimize the aggregate log order-reversal
probability over all SP pairs and both scan axes, given by
\begin{align}
\sum_{\ell<u}\sum_{n=1}^2\log A_{\ell,u}^{(n)},
\end{align}
where the logarithm balances the discriminability across all pairs in a
proportional-fairness manner~\cite{Kelly1998ProportionalFairness}. Directly
optimizing this objective, however, requires the order-reversal probabilities
$A_{\ell,u}^{(n)}$, whose stochastic nature in Definition~\ref{Definition1} makes
an exact closed-form characterization intractable. We thus adopt the following
closed-form upper bound as a relaxation.

\begin{proposition}[Upper Bound on the Order-Reversal Probability]\label{prop:cantelli}
\emph{Consider two SPs $\ell$ and $u$. If $\bar\rho_{\ell}^{(1)}\geq\bar\rho_{u}^{(1)}$,
applying the Cantelli inequality~\cite{Boucheron2013Concentration} upper-bounds the order-reversal probability on the $\mathsf{X}$-axis as
\begin{align}\label{eq:cantelli}
A_{\ell, u}^{(1)}
\leq
\frac{\nu_{\ell}^{(1)}+\nu_{u}^{(1)}-(\bar\rho_{\ell}^{(1)})^2-(\bar\rho_{u}^{(1)})^2}
{\nu_{\ell}^{(1)}+\nu_{u}^{(1)}-2\bar\rho_{\ell}^{(1)}\bar\rho_{u}^{(1)}},
\end{align}
where all components are specified in Lemma~\ref{lemma1}. The same bound holds when
$\bar\rho_{u}^{(1)}>\bar\rho_{\ell}^{(1)}$, and the upper bound of $A_{\ell,u}^{(2)}$
is obtained by replacing the superscript $(1)$ with $(2)$.}
\end{proposition}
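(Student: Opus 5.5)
The plan is to reduce the order-reversal event to a one-sided deviation of a single scalar random variable and then apply Cantelli's inequality directly. Define the projection gap $D \triangleq \rho_{\ell}(\mathbf{i})-\rho_{u}(\mathbf{i})$ with mean $m \triangleq \mathsf{E}[D]=\bar\rho_{\ell}^{(1)}-\bar\rho_{u}^{(1)}\ge 0$.

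\textbf{Step 1 (reduce the event).} If $m>0$, the event in Definition~\ref{Definition1} is simply $\{D\le 0\}$, which equals $\{D-m\le -m\}$. If $m=0$, the event has probability at most one. In that case the right-hand side of \eqref{eq:cantelli} equals one whenever its denominator is positive, since the numerator and denominator coincide. So that case is trivial.

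\textbf{Step 2 (apply Cantelli).} For $m>0$, the one-sided Chebyshev inequality gives
\begin{align}
\mathsf{Pr}[D-m\le -m]\le \frac{\mathsf{Var}(D)}{\mathsf{Var}(D)+m^2}.
\end{align}

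\textbf{Step 3 (compute the variance).} By Assumption~\ref{Assumption of AoA Prior}, the AoAs of different SPs are mutually independent. Hence $\rho_{\ell}(\mathbf{i})$ and $\rho_{u}(\mathbf{i})$ are independent, because each is a deterministic function of its own SP's AoAs for a fixed $\boldsymbol\varphi$. Their covariance therefore vanishes, and
\begin{align}
\mathsf{Var}(D)=\nu_{\ell}^{(1)}-(\bar\rho_{\ell}^{(1)})^2+\nu_{u}^{(1)}-(\bar\rho_{u}^{(1)})^2,
\end{align}
using the moments of Lemma~\ref{lemma1}.

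\textbf{Step 4 (simplify the denominator).} Adding $m^2=(\bar\rho_{\ell}^{(1)})^2-2\bar\rho_{\ell}^{(1)}\bar\rho_{u}^{(1)}+(\bar\rho_{u}^{(1)})^2$ to $\mathsf{Var}(D)$ cancels the squared means. This gives the denominator $\nu_{\ell}^{(1)}+\nu_{u}^{(1)}-2\bar\rho_{\ell}^{(1)}\bar\rho_{u}^{(1)}$, which is exactly \eqref{eq:cantelli}.

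\textbf{Step 5 (remaining cases).} When $\bar\rho_{u}^{(1)}>\bar\rho_{\ell}^{(1)}$, apply the same argument to $-D$, so the event becomes $\{D\ge 0\}$. The variance is unchanged and $m^2$ is symmetric in $\ell$ and $u$, so the same expression results. The $\mathsf{Z}$-axis bound follows verbatim by replacing $\mathbf{i}$ with $\mathbf{k}$, i.e., superscript $(1)$ with $(2)$.

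\textbf{Main obstacle.} The step needing care is Step 3: the cross-covariance must vanish so that the bound depends only on per-SP moments. This relies on the inter-SP independence in Assumption~\ref{Assumption of AoA Prior}, not merely on the within-SP independence of $\theta_\ell^{(0)}$ and $\phi_\ell^{(0)}$. A minor point is that the denominator is nonzero. It equals $\mathsf{Var}(D)+m^2$, which is positive unless $D$ is almost surely zero. That degenerate case is excluded by the nonzero prior variances.
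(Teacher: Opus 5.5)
Your proposal is correct and follows the paper's proof essentially step for step. Like the paper, you define the projection gap, get its variance from inter-SP independence, apply Cantelli to $\{D\le 0\}$, and handle the reversed ordering and the $\mathsf{Z}$-axis by symmetry. Your explicit treatment of the tie case $m=0$ is a small refinement the paper glosses over: there the event is the whole sample space and the bound equals one, whereas the paper simply calls the event "equivalent" to $V\le 0$. The same goes for your check that the denominator is positive.
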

 \begin{proof}
Define the difference random variable
$V_{\ell,u}^{(1)}=\rho_{\ell}(\mathbf{i})-\rho_{u}(\mathbf{i})$, whose mean and variance are
\begin{align}
\mathsf{E}[V_{\ell,u}^{(1)}]&=\bar\rho_{\ell}^{(1)}-\bar\rho_{u}^{(1)},\label{Proof_Proposition1_1}\\
\mathsf{var}[V_{\ell,u}^{(1)}]&=\mathsf{var}[\rho_{\ell}(\mathbf{i})]+\mathsf{var}[\rho_{u}(\mathbf{i})]\nonumber\\
&=\nu_{\ell}^{(1)}-(\bar\rho_{\ell}^{(1)})^2+\nu_{u}^{(1)}-(\bar\rho_{u}^{(1)})^2.\label{Proof_Proposition1_2}
\end{align}
Since the order-reversal event is equivalent to $V_{\ell,u}^{(1)}\le 0$, the Cantelli
inequality gives
\begin{align}
\mathsf{Pr}[V_{\ell,u}^{(1)}\le 0]\leq
\frac{\mathsf{var}[V_{\ell,u}^{(1)}]}{\mathsf{var}[V_{\ell,u}^{(1)}]+(\mathsf{E}[V_{\ell,u}^{(1)}])^2}.
\end{align}
Substituting \eqref{Proof_Proposition1_1} and \eqref{Proof_Proposition1_2} yields~\eqref{eq:cantelli}. The case $\bar\rho_{u}^{(1)}>\bar\rho_{\ell}^{(1)}$ follows identically by interchanging $\ell$ and $u$. The bound for $A_{\ell,u}^{(2)}$ follows analogously by using $V_{\ell,u}^{(2)}=\rho_{\ell}(\mathbf{k})-\rho_{u}(\mathbf{k})$,
which completes the proof.
 \end{proof}
Summing the logarithm of the Cantelli bound in~\eqref{eq:cantelli} over all SP
pairs on both scan axes gives
\begin{align}\label{eq:logsum}
\!\!\!\!\sum_{\ell<u}\!\sum_{n=1}^{2} \log\! A_{\ell,u}^{(n)} \!\leq\!
\sum_{\ell<u}\!\sum_{n=1}^{2}
\log\!\frac{\nu_{\ell}^{(n)}\!\!+\!\nu_{u}^{(n)}\!\!\!-\!\!(\bar\rho_{\ell}^{(n)})^2\!\!-\!\!(\bar\rho_{u}^{(n)})^2}
{\nu_{\ell}^{(n)}\!\!+\!\nu_{u}^{(n)}\!\!-\!2\bar\rho_{\ell}^{(n)}\bar\rho_{u}^{(n)}}\!.
\end{align}
Reducing this upper bound suppresses the order-reversal probabilities of all SP
pairs. Since minimizing it is equivalent to maximizing its negation, we recast
\eqref{eq:logsum} as the maximization objective
\begin{align}\label{eq:objective_rewrite}
f(\boldsymbol\varphi)
\!=\! \sum_{\ell < u}\sum_{n=1}^{2}
\log\!\left(\!1\!+\!\frac{(\bar\rho_{\ell}^{(n)}\!\!-\!\bar\rho_{u}^{(n)})^2}{\nu_{\ell}^{(n)}\!\!-\!(\bar\rho_{\ell}^{(n)})^2\!+\!\nu_{u}^{(n)}\!\!-\!(\bar\rho_{u}^{(n)})^2}\!\right)\!.
\end{align}
Here, the numerator is the squared mean separation between the two SP projections, and the denominator is their total dispersion. Maximizing $f(\boldsymbol\varphi)$ thus enlarges the inter-SP separation while reducing the dispersion on both scan axes, thereby minimizing the upper bound on the order-reversal probability of every
SP pair.

\subsubsection{Front-Side Constraint}
Since the MA is mounted on only one side of the movable plate, each SP should arrive from the front side of the plate to ensure reliable reception. 
SP $\ell$ is thus said to satisfy the \emph{front-side condition} if
\begin{align}
\mathsf{Pr}\!\left[\rho_\ell(\mathbf{j})\le 0\right]\le \epsilon,
\label{eq:front_prob}
\end{align}
where $\epsilon\,(\ll 1)$ denotes the maximum allowable probability of violating the front-side condition.
A tractable sufficient condition for this constraint  can be derived below. 

\begin{proposition}[Sufficient Condition for Front-Side Incidence]\label{prop:front-side-condition}
\emph{A sufficient condition for all SPs to satisfy the front-side incidence constraint is }
\begin{align}\label{eq:front-side sufficient condition}\tag{C1}
c_{\ell}(\boldsymbol{\varphi})\ge 0,
\end{align}
\emph{where}
$
c_{\ell}(\boldsymbol{\varphi})
=
\bar\rho_{\ell}^{(3)}
-
\sqrt{
\frac{1-\epsilon}{\epsilon}
\left(
\nu_{\ell}^{(3)}-\big(\bar\rho_{\ell}^{(3)}\big)^2
\right)
}$.
\emph{Here, the notations are defined in Lemma~\ref{lemma1}.}
\end{proposition}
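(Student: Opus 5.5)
The plan is to apply the one-sided Chebyshev (Cantelli) inequality to $\rho_\ell(\mathbf{j})$, mirroring the proof of Proposition~\ref{prop:cantelli}. Write $m_\ell=\bar\rho_\ell^{(3)}$ and $s_\ell^2=\nu_\ell^{(3)}-(\bar\rho_\ell^{(3)})^2=\mathsf{var}[\rho_\ell(\mathbf{j})]$. Both are available in closed form from Lemma~\ref{lemma1}, and $s_\ell^2\ge 0$. The event $\rho_\ell(\mathbf{j})\le 0$ is the same as $\rho_\ell(\mathbf{j})-m_\ell\le -m_\ell$. Whenever $m_\ell>0$, Cantelli's inequality gives
\begin{align}
\mathsf{Pr}[\rho_\ell(\mathbf{j})\le 0]\le \frac{s_\ell^2}{s_\ell^2+m_\ell^2}.
\end{align}

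It then suffices to make the right-hand side at most $\epsilon$. This requirement is $s_\ell^2\le \epsilon s_\ell^2+\epsilon m_\ell^2$, i.e., $m_\ell^2\ge \frac{1-\epsilon}{\epsilon}s_\ell^2$. Combined with $m_\ell\ge 0$, it is equivalent to
\begin{align}
m_\ell\ge\sqrt{\tfrac{1-\epsilon}{\epsilon}s_\ell^2},
\end{align}
which is exactly $c_\ell(\boldsymbol{\varphi})\ge 0$. Hence (C1) implies the front-side condition \eqref{eq:front_prob} for SP $\ell$. Imposing it for every $\ell$ makes all SPs satisfy the constraint. It is only sufficient because Cantelli's inequality is not tight in general.

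The only delicate point is the boundary case where Cantelli needs a strictly positive mean. If $c_\ell\ge 0$ and $s_\ell>0$, then $m_\ell>0$ automatically, using $\epsilon<1$. If instead $s_\ell=0$, then $\rho_\ell(\mathbf{j})$ is almost surely constant and equal to $m_\ell\ge 0$. This is degenerate, and it is excluded under a nondegenerate Gaussian prior: with $\sigma_\ell$ or $\varsigma_\ell$ positive the projection is nonconstant unless it is identically constant, which a smooth nonconstant map of a Gaussian variable is not. Alternatively one may interpret the event with strict positivity, or note that it has probability zero, and treat this case separately. I expect this degenerate-case remark to be the only step needing care. The rest is the same algebra as in Proposition~\ref{prop:cantelli}.
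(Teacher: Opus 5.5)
Your proposal is correct and follows the paper's own proof. The paper applies Cantelli's inequality to $\rho_\ell(\mathbf{j})$ to obtain $\mathsf{Pr}[\rho_\ell(\mathbf{j})\le 0]\le (\nu_\ell^{(3)}-(\bar\rho_\ell^{(3)})^2)/\nu_\ell^{(3)}$, which is exactly your $s_\ell^2/(s_\ell^2+m_\ell^2)$, and then rearranges the requirement that this bound be at most $\epsilon$ into $c_\ell(\boldsymbol{\varphi})\ge 0$. Your two extra checks are points the paper leaves unstated: that $c_\ell\ge 0$ with $s_\ell>0$ and $\epsilon<1$ forces $m_\ell>0$, as Cantelli requires, and that the degenerate case $m_\ell=s_\ell=0$ must be ruled out by a nondegenerate prior, which needs both $\sigma_\ell,\varsigma_\ell>0$ as the paper's prior density implicitly assumes.
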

 \begin{proof} 
Using the Cantelli bound, $\mathsf{Pr}\!\left[\rho_\ell(\mathbf{j})\le 0\right]$ is written as 
\begin{align}
\mathsf{Pr}[\rho_{\ell}(\mathbf{j})\le 0]
{\leq}
\frac{\nu_\ell^{(3)}-(\bar{\rho}^{(3)}_\ell)^2 }
{\nu_\ell^{(3)}}. \nonumber
\end{align}
The sufficient condition of \eqref{eq:front_prob} is thus given as 
$\frac{\nu_\ell^{(3)}-(\bar{\rho}^{(3)}_\ell)^2 }
{\nu_\ell^{(3)}}
\le \epsilon$, which can be converted to $c_{\ell}(\boldsymbol{\varphi}) \ge 0 $.
 \end{proof}

Consequently, we formulate the plate-orientation problem~as
\begin{align}
&\max_{\boldsymbol\varphi=[\alpha,\beta,\gamma]} f(\boldsymbol{\varphi})\nonumber\\
\text{s.t. } & c_{\ell}(\boldsymbol{\varphi})\geq0, \quad \forall \ell,\ 
\label{eq: problem formualtion 1}\tag{P1}
\end{align}
whose solution will be derived in the following subsection. 

\subsection{Sequential Quadratic Programming Approach}\label{sec:Proposed Algorithm1}
Since \ref{eq: problem formualtion 1} involves a non-convex objective and a non-convex inequality constraint, we adopt SQP, which addresses the non-convexity by solving a sequence of local \emph{quadratic programs} (QPs) constructed through approximations of the objective and the constraint. 

Under standard regularity conditions, the iterates generated by SQP converge to a point satisfying the first-order stationary condition
\begin{align}
\nabla_{\boldsymbol{\varphi}}\mathcal{L}(\boldsymbol{\varphi}^\star, \boldsymbol{\eta}^\star)=\boldsymbol{0},
\end{align}
where the Lagrangian is defined as
\begin{align} \label{eq:lagrangian}
    \mathcal{L}(\boldsymbol{\varphi}, \boldsymbol{\eta}) 
    = f(\boldsymbol{\varphi}) + \sum_{\ell=1}^{L} \eta_\ell\, c_\ell(\boldsymbol{\varphi}),
\end{align}
with $\boldsymbol{\eta}$ denoting the vector of nonnegative Lagrange multipliers with elements $\eta_\ell$.

Let us denote by $\boldsymbol{\varphi}_t$ the current solution at the $t$-th iteration, which is updated in the direction of $\boldsymbol{d}_t$~as 
\begin{align}\label{eq: variable update}
\boldsymbol{\varphi}_{t+1} = \boldsymbol{\varphi}_t + \kappa_t \boldsymbol{d}_t,
\end{align}
where $\kappa_t \in (0, 1]$ is the step size chosen by a globalization strategy (e.g., a line search based on a merit function). Then, the objective function is approximated around $\boldsymbol{\varphi}_t$ by the second-order Taylor expansion as
\begin{align}\label{eq: Hessian based expansion}
f(\boldsymbol{\varphi}_t+\boldsymbol{d}_t)
\approx f(\boldsymbol{\varphi}_t)+ \nabla f(\boldsymbol{\varphi}_t)^\top \boldsymbol{d}_t+\frac{1}{2}\boldsymbol{d}_t^\top\boldsymbol{B}_t\boldsymbol{d}_t,
\end{align}
where $\boldsymbol{B}_t$ is the approximated Hessian of the Lagrangian with respect to $\boldsymbol{\varphi}$, i.e.,  
$\boldsymbol{B}_t \approx \nabla_{\boldsymbol{\varphi}\boldsymbol{\varphi}}^2 \mathcal{L}(\boldsymbol{\varphi}_t, \boldsymbol{\eta}_t)$, which is updated using the well-known \emph{Broyden–Fletcher–Goldfarb–Shanno} (BFGS) algorithm to preserve symmetry and maintain negative definiteness \cite{Nocedal2006}.

Next, the constraint in \ref{eq: problem formualtion 1} is approximated by the first-order Taylor expansion as $c_\ell(\boldsymbol{\varphi}_t+\boldsymbol{d}_t)\approx c_\ell(\boldsymbol{\varphi}_t)+\nabla c_\ell(\boldsymbol{\varphi}_t)^\top \boldsymbol{d}_t$, which becomes the following linear constraint:
\begin{align}\label{eq: gradient based expansion}
c_\ell(\boldsymbol{\varphi}_t)+\nabla c_\ell(\boldsymbol{\varphi}_t)^\top \boldsymbol{d}_t\geq 0,\quad  \forall \ell.
\end{align}

Combining \eqref{eq: Hessian based expansion} and \eqref{eq: gradient based expansion} leads to the following local QP problem for the $t$-th iteration:
\begin{align} \label{eq: problem formualtion 2}
    &\max_{\boldsymbol{d}_t}  \quad \nabla f(\boldsymbol{\varphi}_t)^\top \boldsymbol{d}_t +\frac{1}{2} \boldsymbol{d}_t^\top \boldsymbol{B}_t \boldsymbol{d}_t, \quad
    \text{s.t. } \eqref{eq: gradient based expansion}.  \tag{P2}
\end{align}
The resulting QP subproblem \ref{eq: problem formualtion 2} can be efficiently solved by standard linearly constrained QP solvers (e.g., an active-set method), which yields the primal solution $\boldsymbol{d}_t^\star$ and the associated dual variables $\boldsymbol{\eta}_t^\star$. The direction $\boldsymbol{d}_t^\star$ is used to update $\boldsymbol{\varphi}_{t+1}$ according to \eqref{eq: variable update}, while $\boldsymbol{\eta}_t^\star$ is used to refine the curvature approximation $\boldsymbol{B}_{t+1}$. With an appropriate step-size control (e.g., a merit-function line search), SQP is guaranteed to converge to a first-order stationary point of \ref{eq: problem formualtion 1}, denoted by $\boldsymbol{\varphi}^\star\triangleq(\alpha^\star,\beta^\star, \gamma^\star)$. 

\section{AoA and ToA Estimation under Two Linear Movable Antenna Scans}\label{sec:SPsensing}
Based on the optimized orientation $\boldsymbol{\varphi}^\star$, this section estimates
the SP parameters $\{(\tau_\ell,\theta^{(0)}_{\ell},\phi^{(0)}_{\ell})\}_{\ell=1}^L$ from the two
linear scans specified in~\eqref{eq:two_scans}, along which the SP projections are well separated.
We first estimate the AoA-related parameters on the \(\mathsf{X}\)- and \(\mathsf{Z}\)-axis scans, which are matched across the axes through both the received signals and the prior AoA statistics to recover each SP's elevation and azimuth AoAs. Finally, we estimate each ToA via a spatial filter that amplifies the target SP while suppressing the others.

\subsection{Spatial-Frequency Parameter Extraction}\label{sec:EstimateAOA}
With the movable plate oriented at the optimized angles $\boldsymbol{\varphi}^\star$,
we estimate two types of AoA-related parameters, referred to as \emph{spatial-frequency parameters} (SFPs). Specifically, when the MA is located at $\boldsymbol{p}_m$, the additional propagation distance in \eqref{eq: propagation dist} is rewritten as 
\begin{align}
\rho_\ell(\boldsymbol{p}_m)
= x_m  {u}_\ell + z_m {v}_\ell,
\label{eq:rho_uv_def}
\end{align}
where \({u}_\ell\triangleq \sin\theta_\ell\cos\phi_\ell\) and
\({v}_\ell\triangleq \cos\theta_\ell\) are the SFPs.  Recall that 
the MA collects \(M\) samples along each of the \(\mathsf X\)- and \(\mathsf Z\)-axis scans. The parameters \({u}_\ell\) and \({v}_\ell\) are then estimated separately from the measurements obtained along the \(\mathsf X\)- and \(\mathsf Z\)-axes using the MUSIC algorithm \cite{Schmidt1986}, while omitting the details. This yields the two SFP sets
\begin{align}
\mathcal{U} = \{ \hat{{u}}^{(1)}, \dots,  \hat{{u}}^{(L)}\}, \quad
\mathcal{V} = \{ \hat{{v}}^{(1)}, \dots,  \hat{{v}}^{(L)}\},
\end{align}
where \(L\) denotes the number of SPs\footnote{
Recovering the AoA of an SP requires pairing one SFP from $\mathcal{U}$ with one from $\mathcal{V}$. Accordingly, the number of resolvable SPs is determined by the smaller number of SFPs detected along the two axes, which is assumed to be equal to $L$ for ease of explanation.}. The indices of the detected SFPs in \(\mathcal{U}\) and \(\mathcal{V}\) simply
label the extracted components and do not correspond to the SP indices.
Therefore, they should be properly paired to recover the elevation and azimuth
AoAs, as discussed in the following subsection.

\subsection{AoA Estimation via Spatial-Frequency Parameter Pairing}\label{sec:SFP_pairing}
Since the estimated SFPs in \(\mathcal U\) and \(\mathcal V\) are unordered, 
the $L$ SFPs in each set admit \(L!\) possible orderings. Let \(\mathcal S\) denote the set of all permutations of $\{1,\dots, L\}$. We aim to identify the permutation pair \((\Theta^\star,\Xi^\star)\in\mathcal S\times\mathcal S\) that
aligns the SFPs in \(\mathcal U\) and \(\mathcal V\) with the corresponding SP indices. Under this pairing, the SFP pair associated with SP $\ell$ is
$(\hat{{u}}^{(\Theta^\star_\ell)},\hat{{v}}^{(\Xi^\star_\ell)})$ and the complete set of paired SFPs is thus given by 
\begin{align}
\{(\hat{{u}}^{(\Theta^\star_\ell)},\hat{{v}}^{(\Xi^\star_\ell)})\}_{\ell=1}^L.
\end{align}

For correct permutation pairing, we leverage two clues: the first is the sequence of received signals obtained along two linear MA scans in \eqref{eq:scan}, while the second is the prior AoA statistics in Assumption~\ref{Assumption of AoA Prior}, whose details are given as follows.

\subsubsection{Received-Signal Log-Likelihood}
For a given permutation pair \((\Theta,\Xi)\) and the resultant SFP pairs \(\{(\hat{{u}}^{(\Theta_\ell)},\hat{{v}}^{(\Xi_\ell)})\}_{\ell=1}^L\), we construct the steering matrix to model the received signal across the \(2M\) positions as
\begin{align}
\boldsymbol\Upsilon^{(k)}
=
\big[\boldsymbol\Phi^{(k)}_1(\Theta,\Xi),\dots,\boldsymbol\Phi^{(k)}_L(\Theta,\Xi)\big],
\end{align}
where \(\boldsymbol\Phi_\ell^{(k)}(\Theta,\Xi)\in\mathbb C^{2M\times 1}\) denotes the steering vector of candidate SP $\ell$. 
The \(m\)-th element, corresponding to the phase response of SP $\ell$ at \(\boldsymbol p_m\), is given by
\begin{align}
\big[\boldsymbol\Phi_\ell^{(k)}(\Theta,\Xi)\big]_m
\!=\!
\exp\!\left(
-j2\pi f_k
\frac{x_m \hat{{u}}^{(\Theta_\ell)} + z_m \hat{{v}}^{(\Xi_\ell)}}{c}
\right),
\end{align}
where $x_m$ and $z_m$ are specified in \eqref{eq:rho_uv_def}.

We stack the received signals on the \(k\)-th subcarrier into
\begin{align}
\boldsymbol{s}^{(k)} = [s^{(k)}(\boldsymbol p_1^{(1)}),\dots,s^{(k)}(\boldsymbol p_M^{(1)}),s^{(k)}(\boldsymbol p_1^{(2)}),\dots,s^{(k)}(\boldsymbol p_M^{(2)})]^\top. \nonumber
\end{align}
Under the Gaussian model in \eqref{eq:received_signal}, the log-likelihood of the received signal is
\begin{align}
\mathcal L(\Theta,\Xi)
&\triangleq
\log {p}\!\left(\{\boldsymbol s^{(k)}\}_{k=1}^{K}\mid \Theta,\Xi,\{{\hat{\boldsymbol{q}}}^{(k)}\}_{k=1}^{K}\right)
\nonumber\\
&=
-\frac{\sum_{k=1}^{K}
\left\|
\boldsymbol s^{(k)}-\boldsymbol\Upsilon^{(k)} \hat{\boldsymbol{q}}^{(k)}
\right\|_2^2}{PN_0/K}
+\mathcal{C}_1,
\label{eq:L_data}
\end{align}
where \(\mathcal{C}_1=-2KM\log\!\left(\pi PN_0/K\right)\). The vector \(\boldsymbol {\hat{q}}^{(k)}=[\hat{q}_1^{(k)},\dots,\hat{q}_L^{(k)}]^\top\in\mathbb C^{L\times 1}\) contains the complex path coefficients associated with the \(L\) candidate SPs on the \(k\)-th subcarrier, which can be obtained as the closed-form least-squares solution for the $k$-th subcarrier as 
\begin{align}
\boldsymbol{\hat{q}}^{(k)}=\big((\boldsymbol\Upsilon^{(k)})^H\,\boldsymbol\Upsilon^{(k)}\big)^{-1} (\boldsymbol\Upsilon^{(k)})^H\,\boldsymbol s^{(k)}.
\end{align}

\subsubsection{Prior Statistics}
Each estimated SFP pair $(\hat{u}^{(\Theta_\ell)},\hat{v}^{(\Xi_\ell)})$ yields the
unit direction vector $\hat{\boldsymbol{a}}_\ell$ of SP $\ell$ in~\eqref{eq: revised AoA direction},
from which its AoA can be recovered via~\eqref{eq:recover} 
in the rotated frame $(\mathsf{X},\mathsf{Y},\mathsf{Z})$. 
To compare it with the prior
in Assumption~\ref{Assumption of AoA Prior}, which is defined in the initial frame
$(\mathsf{X}^{(0)},\mathsf{Y}^{(0)},\mathsf{Z}^{(0)})$, 
we rotate $\hat{\boldsymbol{a}}_\ell$ back to the initial frame as
$\hat{\boldsymbol{a}}^{(0)}_\ell=\boldsymbol{R}(\boldsymbol{\varphi}^\star)\,\hat{\boldsymbol{a}}_\ell$,
and recover the initial-frame AoA pair $(\hat{\theta}^{(0)}_\ell,\hat{\phi}^{(0)}_\ell)$
via~\eqref{eq:recover}.
Under the Gaussian model in Assumption~\ref{Assumption of AoA Prior}, the log-prior density function is
\begin{align}
\mathcal{D}(\Theta,\Xi)&\triangleq \log{p}\!\left(\{\hat{\theta}_\ell^{(0)}\!,\!\hat{\phi}_\ell^{(0)}\}_{\ell=1}^{L} \!\mid\! \Theta,\Xi\right)
\nonumber \\
=
-\sum_{\ell=1}^{L}&\!\bigg(
\frac{\big(\hat{\theta}^{(0)}_\ell\!-\!\mu_\ell\big)\!^2}{2\sigma_\ell^2}
\!+\!\frac{\big(\hat{\phi}^{(0)}_\ell\!-\!\xi_\ell\big)\!^2}{2\varsigma_\ell^2}
\!+\!\log(2\pi \sigma_\ell \varsigma_\ell)
\bigg)\!.
\label{eq:D_data}
\end{align}

By combining \eqref{eq:L_data} and \eqref{eq:D_data}, we formulate the MAP-based pairing rule as
\begin{align}\label{eq:map_matching_rule}
(\Theta^\star,\Xi^\star)
=
\arg\max_{\Theta,\Xi\in\mathcal S}
\Big(
\mathcal{L}(\Theta,\Xi)
+
\mathcal{D}(\Theta,\Xi)
\Big).
\end{align}
With the resulting optimal permutation pair \((\Theta^{\star},\Xi^{\star})\), the AoA pairs
\(\{(\hat\theta_{\ell}^{(0)}, \hat\phi_{\ell}^{(0)})\}_{\ell=1}^{L}\) can then be
recovered. However, directly evaluating \eqref{eq:map_matching_rule} requires
enumerating all \((L!)^2\) permutation pairs in \(\mathcal{S}\times\mathcal{S}\). For each pair,
evaluating \(\mathcal{L}(\Theta,\Xi)\) costs \(\mathcal{O}(KML^2)\) for the \(K\)
subcarrier-wise least-squares solves, while \(\mathcal{D}(\Theta,\Xi)\) costs
\(\mathcal{O}(L)\), yielding an overall complexity of
\(\mathcal{O}((L!)^2 \cdot KML^2)\).

\subsection{Low-Complexity MAP-Based AoA Estimation}\label{sec:proposed_low_complexity}
To avoid the exhaustive search required by~\eqref{eq:map_matching_rule}, the proposed algorithm first identifies $N$ promising SFP pairings and then associates the paired SFPs with the SP identities. This yields $N$ candidate permutation pairs, over which the MAP rule is evaluated. In the following, we explain the proposed two-step algorithm. 

\textbf{Step 1. Prior-Based SFP Pairing}:
Among the $L!$ one-to-one pairings between the SFPs in $\mathcal{U}$ and $\mathcal{V}$, this step identifies the $N$ most plausible ones. For each SFP pair $(i,j)$, where $i,j\in\{1,\ldots,L\}$, let $(\hat{\theta}^{(0)}_{i,j},\hat{\phi}^{(0)}_{i,j})$ denote the AoA recovered from $\hat{u}^{(i)}$ and $\hat{v}^{(j)}$. Since the SP identity associated with each pair is not yet known at this stage, we score the pair using the largest log-prior density of its recovered AoA over all SPs. This yields the score matrix $\boldsymbol{Q}\in\mathbb{R}^{L\times L}$ with entries
\([\boldsymbol{Q}]_{i,j} = \max_{\ell \in \{1,\ldots,L\}}
\log p_\ell\!\left(\hat{\theta}^{(0)}_{i,j}, \hat{\phi}^{(0)}_{i,j}\right)\),
where $p_\ell$ denotes the prior AoA density of SP $\ell$, as defined in Assumption \ref{Assumption of AoA Prior}. This score serves only to screen plausible SFP pairings. The one-to-one association between the recovered AoAs and the distinct SP identities is enforced in Step~2.

A one-to-one pairing is represented by a permutation \(\Omega\in\mathcal{S}\), where \(\Omega_i\) denotes the index of the SFP in 
\(\mathcal{V}\) paired with \(\hat{u}^{(i)}\). The resulting SFP pairs are therefore given by \(\{(\hat{u}^{(i)},\hat{v}^{(\Omega_i)})\}_{i=1}^{L}\). Because each SFP appears exactly once, a pairing selects \(L\) entries of \(\boldsymbol{Q}\), with one entry from each row and each column. We define its aggregate score as
\begin{align}\label{eq:pairing_score}
\mathcal{J}(\Omega)=\sum_{i=1}^{L}\big[\boldsymbol{Q}\big]_{i,\Omega_i}.
\end{align}
A larger score indicates that the AoAs recovered under the pairing are
collectively more consistent with the prior AoA statistics. We therefore
select the \(N\) highest-scoring pairings by solving
\begin{align}
\mathcal{A}^{\star}
\triangleq
\underset{
\substack{
\mathcal{A}\subseteq\mathcal{S}\\
|\mathcal{A}|=N
}}
{\arg\max}
\quad
\sum_{\Omega\in\mathcal{A}}
\mathcal{J}(\Omega),
\label{eq:problem_formulation_3}\tag{P3}
\end{align}
where
\(\mathcal{A}^{\star}
=
\{\Omega^{(n)}\}_{n=1}^{N}\)
and the selected permutations are indexed such that
\begin{align}
\mathcal{J}\!\left(\Omega^{(1)}\right)
\ge
\mathcal{J}\!\left(\Omega^{(2)}\right)
\ge
\cdots
\ge
\mathcal{J}\!\left(\Omega^{(N)}\right).
\end{align}
Problem~\ref{eq:problem_formulation_3} is an \(N\)-best linear assignment
problem, which can be efficiently solved using Murty's
algorithm~\cite{Murty1968}.

\textbf{Step 2. SP Association}: 
The $n$-th candidate pairing, represented by $\Omega^{(n)}$, consists of $L$ SFP pairs, with the $i$-th pair given by $(\hat{u}^{(i)},\hat{v}^{(\Omega^{(n)}_i)})$. Its corresponding AoA,
recovered in Step 1, is
\((\hat{\theta}^{(0)}_{i,\Omega^{(n)}_i},
\hat{\phi}^{(0)}_{i,\Omega^{(n)}_i})\). This step determines the SP identity of each pair, which was left unresolved
in Step~1, by comparing its recovered AoA with the prior AoA statistics of
every SP. Specifically, for the \(n\)-th candidate pairing, we compute another scoring matrix $\boldsymbol{C}^{(n)}\in\mathbb{R}^{L\times L}$, whose entries are given by
\(\big[\boldsymbol{C}^{(n)}\big]_{i,\ell}
=
\log p_\ell\!\big(
\hat{\theta}^{(0)}_{i,\Omega^{(n)}_i},
\hat{\phi}^{(0)}_{i,\Omega^{(n)}_i}
\big)\),
where each entry quantifies the consistency between the AoA recovered from
the \(i\)-th pair and the prior AoA statistics of SP~\(\ell\).

Under the assumption that the $L$ detected pairs originate from $L$ distinct SPs, each SP must be associated with exactly one pair, and vice versa.
An association is thus represented by the permutation $\Theta\in\mathcal{S}$ as introduced in
Sec.~\ref{sec:SFP_pairing}, where $\Theta_\ell$ denotes the index of the SFP pair associated with SP $\ell$. The permutation \(\Theta\) selects \(L\) entries
of \(\boldsymbol{C}^{(n)}\), with one entry from each row and each column.
The resulting association score is defined as
\begin{align}
\mathcal{K}^{(n)}(\Theta)
=
\sum_{\ell=1}^{L}
\big[\boldsymbol{C}^{(n)}\big]_{\Theta_\ell,\ell}.
\label{eq:assoc_score}
\end{align}
A larger score indicates that the recovered AoAs are collectively more
consistent with the prior statistics of their assigned SPs. The optimal
association for the \(n\)-th candidate pairing is thus obtained as
\begin{align}
\Theta^{(n)}
\triangleq
\underset{\Theta\in\mathcal{S}}{\arg\max}
\quad
\mathcal{K}^{(n)}(\Theta),
\label{eq:problem_formulation_4}\tag{P4}
\end{align}
which is a linear assignment problem and can be solved using the Hungarian
algorithm~\cite{Kuhn1955}. Because each pair is indexed by its \(\mathcal{U}\)-SFP,
\(\Theta^{(n)}_\ell\) gives the \(\mathcal{U}\)-index assigned to SP~\(\ell\),
while
\begin{align}
\Xi^{(n)}_\ell
=
\Omega^{(n)}_{\Theta^{(n)}_\ell}
\end{align}
gives the corresponding \(\mathcal{V}\)-index.

Repeating Step 2 for all permutations in \(\mathcal{A}^{\star}\) yields the candidate set \(\{(\Theta^{(n)},\Xi^{(n)})\}_{n=1}^N\). The final permutation pair is obtained by evaluating the MAP criterion in~\eqref{eq:map_matching_rule} only over this reduced candidate set. 

\begin{figure}[t]
  \centering
  \includegraphics[width=0.83\linewidth]{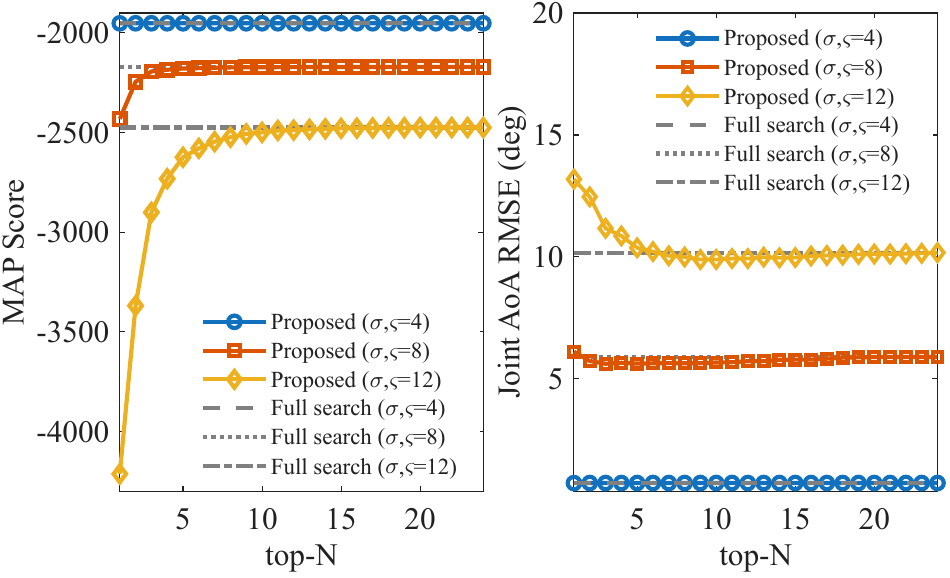}
  \vspace{-1mm}
\caption{MAP objective and joint AoA RMSE versus $N$.\vspace{-15pt}}
  \label{fig:topN}
\end{figure}

\begin{remark}[Effect of MA Orientation Control]\emph{Although the prior statistics provide only weak side information, they become more informative after orientation control separates the SPs, because a correctly matched SFP pair is more likely to yield an AoA consistent with the corresponding SP prior. It is numerically confirmed that even a small $N$ is sufficient to
include the correct permutation pair among the candidates, as observed in Fig.~\ref{fig:topN}.}
\end{remark}

\begin{remark}[Reduced Computation Complexity]\emph{For a prescribed $N\ll L!$, the two steps require $\mathcal{O}(NL^3)$, while the restricted MAP evaluation requires $\mathcal{O}(NKML^2)$. The proposed method thus replaces the exhaustive factorial search with a complexity that scales polynomially in $L$ and linearly in $N$. In this paper, we set $N=10$, which is found to be sufficient for reliable pairing.}
\end{remark}

\subsection{ToA Estimation via AoA-Matched Spatial Filtering}\label{sec:traj_refine}
To estimate each SP's ToA, we construct an AoA-matched spatial filter using the corresponding AoA estimates and apply it to the \(\mathsf{X}\)- and \(\mathsf{Z}\)-axis scan signals. For SP $\ell$, the spatial filter is defined as
$\boldsymbol{G}_{\ell}=\big[\boldsymbol{G}^{(1)}_{\ell},\boldsymbol{G}^{(2)}_{\ell}\big]
\in\mathbb{C}^{K\times 2M}$, where 
\begin{align}
\big[\boldsymbol{G}^{(n)}_\ell\big]_{k,m}
= \exp\!\left(j2\pi f_k \frac{\hat{\rho}_\ell(\boldsymbol{p}^{(n)}_m)}{c}\right),
\quad n\in\{1,2\}.
\end{align}
Here, $\hat{\rho}_\ell(\boldsymbol{p}^{(n)}_m)$ denotes the additional propagation distance in~\eqref{eq:rho_uv_def} evaluated with the estimated AoA pair. 

Let the received-signal matrix from the two scans be stacked across the
subcarriers as $\boldsymbol{S}=[\boldsymbol{s}^{(1)},\dots,
\boldsymbol{s}^{(K)}]^{\top}\in\mathbb{C}^{K\times 2M}$. The resultant spatially filtered signal for SP~\(\ell\) is obtained as
\begin{align}
\tilde{\boldsymbol{s}}_\ell
= \big(\boldsymbol{G}_\ell \odot \boldsymbol{S}\big)\,\boldsymbol{1}_{2M}
\in \mathbb{C}^{K},
\end{align}
where $\odot$ denotes the Hadamard product and $\boldsymbol{1}_{2M}$ is the all-ones vector. 

To characterize the effect of spatial filtering, the $k$-th element of
$\tilde{\boldsymbol{s}}_\ell$ can be expressed as
\begin{align}\label{eq:filtered_kth}
\big[\tilde{\boldsymbol{s}}_\ell\big]_k
&= \frac{P}{K}\sum_{u=1}^{L} \alpha_{u}
\exp\left(-j2\pi f_k \tau_{u}\right)  \nonumber\\
& \!\!\!\!\!\!\!\!\!\!\times \sum_{n=1}^{2}\sum_{m=1}^{M}
\!\exp\!\left(\!-j2\pi f_k\,
\frac{\rho_{u}(\boldsymbol{p}^{(n)}_m)\!-\!\hat{\rho}_{\ell}(\boldsymbol{p}^{(n)}_m)}{c}
\!\right) 
\!+\! \tilde{w}^{(k)}\!,
\end{align}
where $\tilde{w}^{(k)}$ is the filtered noise following 
$\mathcal{CN}(0, 2MPN_0/K)$. Under the
narrowband approximation and assuming accurate AoA estimates,
\eqref{eq:filtered_kth} can be decomposed into the
target-SP component ($u = \ell$) and the residual inter-SP interference as 
\begin{align}\label{eq:filtered_final}
\big[\tilde{\boldsymbol{s}}_\ell\big]_k
\approx& \frac{2MP}{K}\alpha_\ell\! \exp\!\left(-j2\pi f_k \tau_\ell\right)
\!+\! \frac{P}{K}\!\sum_{u \neq \ell} \alpha_u
\exp\!\left(-j2\pi f_k \tau_u\right) \nonumber\\
&\!\!\!\!\!\!\!\!\!\!\!\!\!\!\!\!\!\times\!\sum_{n=1}^{2}\!
\exp\!\left(j (M{\!-\!}1)\pi \frac{f_c\Delta\rho^{(n)}_{\ell,u}}{c}\!\right)
\!\frac{\!\sin\!\big(M\pi f_c \Delta\rho^{(n)}_{\ell,u}/c\big)}
{\!\sin\!\big(\pi f_c \Delta\rho^{(n)}_{\ell,u}/c\big)},
\end{align}
where the noise term is omitted for brevity. Here, $\Delta\rho^{(1)}_{\ell,u}$ and $\Delta\rho^{(2)}_{\ell,u}$ denote the per-step
difference between SPs $\ell$ and $u$ in the additional propagation distance in~\eqref{eq: propagation dist}, along the $\mathsf{X}$- and
$\mathsf{Z}$-axis scans, respectively, given as
\begin{align}
\Delta\rho^{(1)}_{\ell,u}
&= d\big(\sin\hat{\theta}_\ell\cos\hat{\phi}_\ell
-\sin\theta_u\cos\phi_u\big), \nonumber\\
\Delta\rho^{(2)}_{\ell,u}
&= d\big(\cos\hat{\theta}_\ell-\cos\theta_u\big). \label{eq:drho2}
\end{align}

\begin{figure}[t]
  \centering

  \begin{subfigure}[t]{0.85\linewidth}   
    \centering
   \includegraphics[width=1\linewidth]{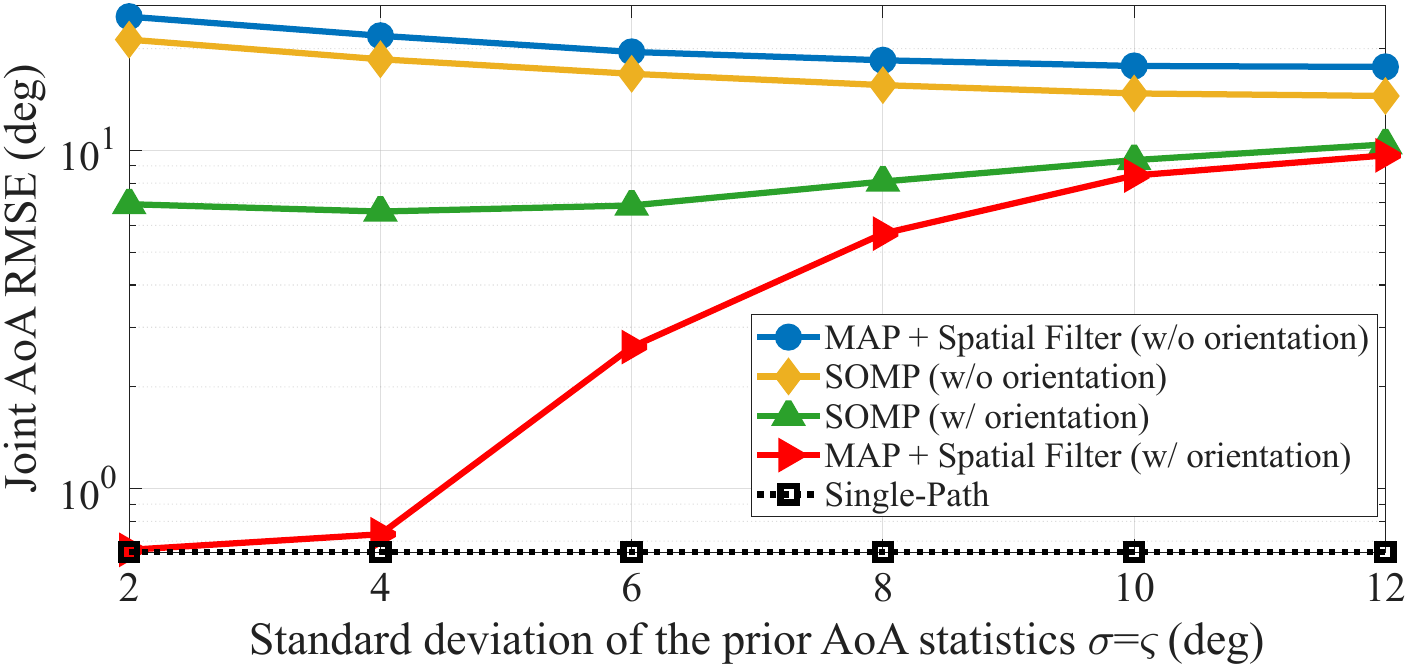}  
    \vspace{-5mm}
    \caption{}
    \label{fig:aoa_rmse}
  \end{subfigure}

  \begin{subfigure}[t]{0.85\linewidth}
    \centering
    \includegraphics[width=1\linewidth]{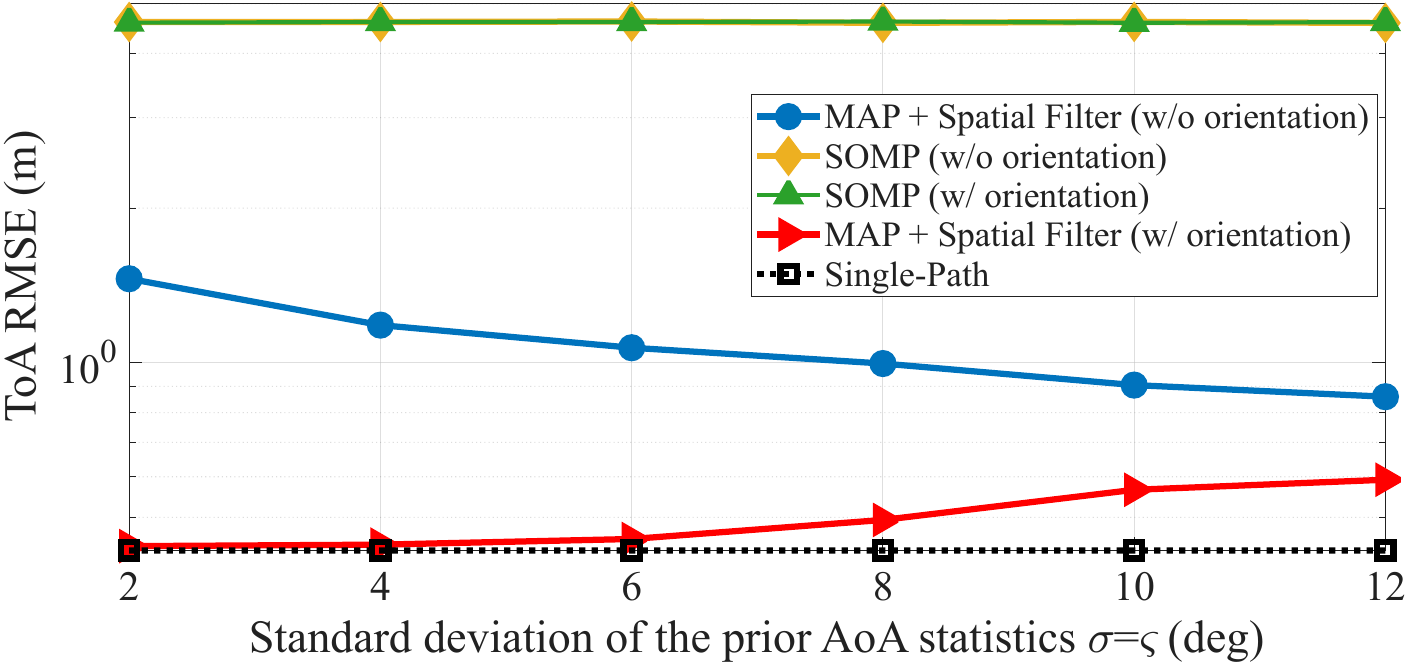}
    \vspace{-5mm}
    \caption{}
    \label{fig:toa_rmse}
  \end{subfigure}

  \vspace{-1mm}
  \caption{(a) AoA RMSE and (b) ToA RMSE versus the prior standard deviation.  }
  \label{fig:rmse}
  \vspace{-16pt}
\end{figure}

The first term in~\eqref{eq:filtered_final}, corresponding to the target
SP \(\ell\), coherently combines all $2M$ measurements and thus achieves an amplitude gain of \(2M\). On the other hand, the second term contains the components of the other SPs, which act as interference in estimating the target ToA $\tau_{\ell}$. For each interfering SP, the contribution from
the \(n\)-th scan is weighted by a Dirichlet kernel. Its magnitude is substantially reduced when its
per-step propagation-distance mismatch lies outside the main lobe, namely,
\begin{align}\label{eq:nulling_condition}
\big|\Delta\rho^{(n)}_{\ell,u}\big| \ge \frac{c}{Mf_c}.
\end{align}
From the Dirichlet-kernel form in~\eqref{eq:filtered_final} and the
main-lobe condition in~\eqref{eq:nulling_condition}, we obtain the following
corollary.

\begin{corollary}[Main-Lobe Interference Attenuation via Spatial Filtering]\label{prop:nulling}
\emph{Consider the spatial filter $\boldsymbol{G}_{\ell}$ designed for SP $\ell$. Under the main-lobe approximation, the contribution of SP~\(u\)
along the \(n\)-th scan lies outside the main lobe if \eqref{eq:nulling_condition} is satisfied. Consequently, the contributions
of all interfering SPs lie outside the main lobes of both scan responses if
\begin{align}\label{eq:M_condition}
M \ge \frac{c}{f_c\,\Delta\rho_{\min}},
\end{align}
where $\Delta\rho_{\min}\triangleq
\min_{n\in\{1,2\}}\ \min_{\ell\neq u}\ \big|\Delta\rho^{(n)}_{\ell,u}\big|$.
Under this condition, the main-lobe interference is attenuated, although
residual sidelobe components may remain. }
\end{corollary}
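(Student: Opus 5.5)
The plan is to read the corollary directly off the Dirichlet-kernel form in \eqref{eq:filtered_final}. First, I would isolate the factor multiplying $\alpha_u$ for the $n$-th scan. Writing $x\triangleq f_c\Delta\rho^{(n)}_{\ell,u}/c$, its magnitude is
\begin{align}
\big|D_M^{\mathrm{sp}}(x)\big| \triangleq \left|\frac{\sin(M\pi x)}{\sin(\pi x)}\right|,
\end{align}
which attains its peak value $M$ at $x=0$. Its first zeros are at $x=\pm 1/M$, and these delimit the main lobe $|x|<1/M$. Hence the contribution of SP $u$ along scan $n$ lies outside the main lobe exactly when $|f_c\Delta\rho^{(n)}_{\ell,u}/c|\ge 1/M$, which is \eqref{eq:nulling_condition}.

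Second, I would take the worst case over all pairs and both scans. Since $\Delta\rho_{\min}$ is the smallest of the $|\Delta\rho^{(n)}_{\ell,u}|$, the condition \eqref{eq:M_condition} gives, for every $n$ and every $u\neq\ell$,
\begin{align}
|\Delta\rho^{(n)}_{\ell,u}| \ge \Delta\rho_{\min} \ge \frac{c}{Mf_c}.
\end{align}
So \eqref{eq:nulling_condition} holds uniformly, and every interfering term lies outside both main lobes.

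Finally, I would justify the attenuation claim. Outside the main lobe, on $1/M\le|x|\le 1/2$, we have $|\sin(\pi x)|\ge 2|x|$. This gives $|D_M^{\mathrm{sp}}(x)|\le 1/(2|x|)\le M/2$, which is strictly below the peak gain $M$ (and the target itself is combined with gain $2M$ in total). The bound decays like $1/|x|$ further out, so only sidelobe residuals remain; this is why the statement only asserts attenuation.

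The main subtlety is periodicity. $D_M^{\mathrm{sp}}$ is $1$-periodic in $x$, so there are grating lobes near integer $x$. To exclude them I would invoke the half-wavelength constraint $d\le c/(2f_c)$ from Sec.~\ref{sec:opt}. Because $|\Delta\rho^{(n)}_{\ell,u}|\le 2d$, this gives $|x|\le 1$; a stricter reading of the no-aliasing condition keeps $|x|\le 1/2$. I would state this restriction explicitly as part of the main-lobe approximation.
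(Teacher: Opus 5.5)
Your proposal is correct and follows the paper's own reasoning: identify the $M$-point Dirichlet kernel in \eqref{eq:filtered_final}, place its main-lobe edge at $|f_c\Delta\rho^{(n)}_{\ell,u}/c|=1/M$ to obtain \eqref{eq:nulling_condition}, and then take the worst case through $\Delta\rho_{\min}$ to obtain \eqref{eq:M_condition}. Your explicit $M/2$ sidelobe bound and grating-lobe caveat go beyond what the paper states; although $d\le\lambda_c/2$ only gives $|x|\le 1$, you do not need to assume $|x|\le 1/2$, because the identity $|D_M^{\mathrm{sp}}(x)|=|D_M^{\mathrm{sp}}(1-x)|$ extends your bound to the whole range $1/M\le|x|\le 1-1/M$.
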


\begin{remark}[Effect of Orientation Control on Spatial Filtering]\emph{
The orientation control in Sec.~\ref{sec:opt} enlarges the minimum inter-SP separation, as exemplified in Remark~\ref{Remark1}, where the value of $\Delta\rho_{\min}$ increases from $0.0259d$ to $0.2412d$. This facilitates reducing the number of MA positions $M$ required for effective interference suppression. }
\end{remark}

With the other SPs sufficiently attenuated, the ToA of the target SP is encoded in the linear phase progression of \(\tilde{\boldsymbol{s}}_\ell\) across the \(K\) subcarriers. It is thus estimated as
\begin{align}
\hat{\tau}_\ell
=
\underset{\tau}{\arg\max}\;
\big|
\sum_{k=1}^{K}
\big[\tilde{\boldsymbol{s}}_\ell\big]_k
\exp\!\left(j2\pi f_k\tau\right)
\big|^2,
\label{eq:toa_estimation}
\end{align}
whose objective attains its maximum at \(\tau=\tau_\ell\). Repeating this procedure for
\(\ell\in\{1,\ldots,L\}\) yields the final set of estimated SP parameters,
 \(\{(\hat\tau_\ell,\hat\theta_\ell^{(0)}, \hat\phi_\ell^{(0)})\}_{\ell=1}^{L}\).

\section{Simulation Results}
This section evaluates the proposed framework for estimating the SP parameters \(\{(\hat\tau_\ell,\hat\theta_\ell^{(0)}, \hat\phi_\ell^{(0)})\}_{\ell=1}^{L}\) under different communication conditions, and further validates its real-world applicability through ray-traced environments.

\begin{figure}[t]
\centering
\includegraphics[width=0.85\linewidth]{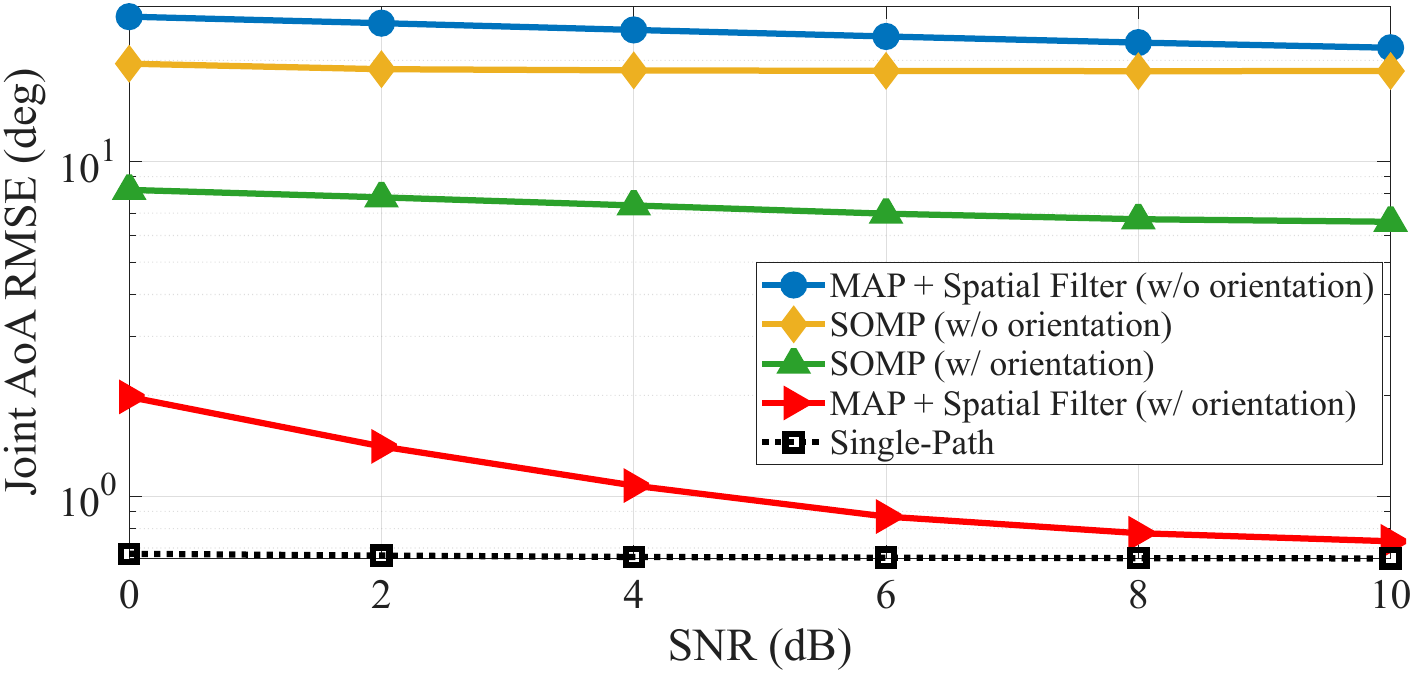}
\vspace{-3mm}
\caption{Joint AoA RMSE versus SNR. \vspace{-14pt}}
\label{fig:snr}
\end{figure}

\begin{figure}[t]
\centering
\includegraphics[width=0.85\linewidth]{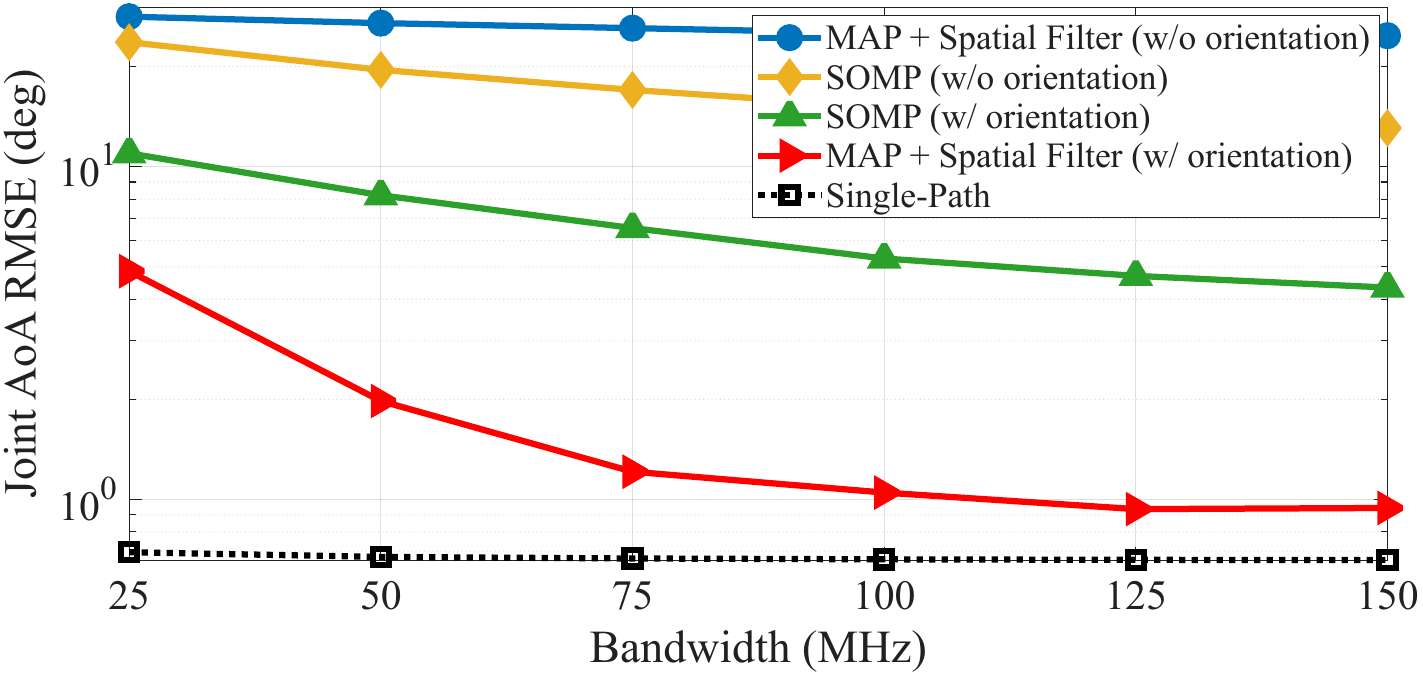}
\vspace{-2mm}
\caption{Joint AoA RMSE versus bandwidth. \vspace{-15pt} }
\label{fig:bandwidth}
\end{figure}

\subsection{Simulation Setup}\label{sec: setup}
Unless specified otherwise, the simulation parameters are set as follows.
We set $f_c=28$~GHz, $B_0=50$~MHz, and $K=64$ subcarriers, and assume that
the direct path is blocked so that all SPs are NLoS. A single MA collects $M=16$ measurements per scan with step size
$d=\lambda_c/2$, where $\lambda_c = c/f_c$ denotes the carrier
wavelength. All results are averaged over $5{,}000$ Monte Carlo trials. The AoAs of the $L=4$ SPs are drawn independently from their priors with means $\{\mu_\ell\}=\{115^\circ,98^\circ,51^\circ,51^\circ\}$ and
$\{\xi_\ell\}=\{56^\circ,115^\circ,40^\circ,121^\circ\}$ and standard deviations $\sigma_\ell=\varsigma_\ell=4^\circ$, where two SPs share the same elevation mean to make the pairing non-trivial, and the ToAs from $\tau_\ell\sim\mathcal{U}[\tau_0,\tau_0+1/B_0]$, so that the SPs are highly overlapped in both angle and delay at the nominal bandwidth. The AoA search uses a $101$-point grid over the SFP domain $[-1,1]$, the delay search in~\eqref{eq:toa_estimation} uses an oversampling rate of $4$, and the complex path gains follow the 3GPP path-loss model~\cite{TR38901}. Finally, the plate
orientation is optimized with the front-side tolerance $\epsilon=0.05$ using multi-start SQP with $100$ random initializations, and the proposed estimator retains the $N=10$ most plausible pairings. The AoA and ToA estimation performance is evaluated using the \emph{root mean square error} (RMSE), defined as
\begin{align}
\mathrm{RMSE}(\mathrm{AoA})
&=
\sqrt{\mathsf{E}\!\left[\left(\|\boldsymbol{\theta}-\hat{\boldsymbol{\theta}}\|_2^2
+\|\boldsymbol{\phi}-\hat{\boldsymbol{\phi}}\|_2^2\right)\!/L\right]}, \nonumber\\
\mathrm{RMSE}(\mathrm{ToA})
&=
\sqrt{\mathsf{E}\!\left[\|\boldsymbol{\tau}-\hat{\boldsymbol{\tau}}\|_2^2/L\right]}.
\end{align}

For AoA and ToA estimation, the proposed algorithm, referred to as
\textbf{MAP + Spatial Filter (w/ orientation)}, is compared with the following
four benchmarks:
\begin{itemize}[leftmargin=*]
    \item \textbf{MAP + Spatial Filter (w/o orientation)}: given an initial orientation, applies the same
    MAP-based pairing and spatial filtering as the proposed algorithm.
    \item \textbf{SOMP (w/o orientation)}: given an initial orientation, applies SOMP to estimate the AoAs
    over an angle dictionary and the ToAs over a delay
    dictionary~\cite{SCao2025}.
    \item \textbf{SOMP (w/ orientation)}: With the optimized plate orientation, applies the same SOMP-based
    estimation.
    \item \textbf{Single-Path}: serves as a lower bound for the single-SP case (\(L=1\)).
\end{itemize}
We adopt the SOMP-based scheme of~\cite{SCao2025} as our benchmark,
since it shares our far-field, wideband, single-MA setup.

\begin{figure}[!t]
  \centering

  \begin{subfigure}[t]{0.85\linewidth}
    \centering
    \includegraphics[width=\linewidth]{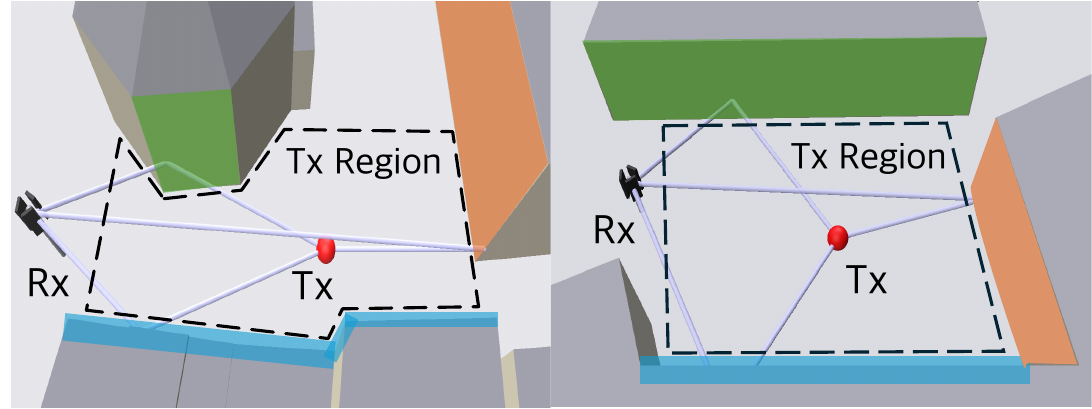}
    \caption{}
    \label{fig:subfig1}
  \end{subfigure}

  \begin{subfigure}[t]{0.85\linewidth}
    \centering
    \includegraphics[width=\linewidth]{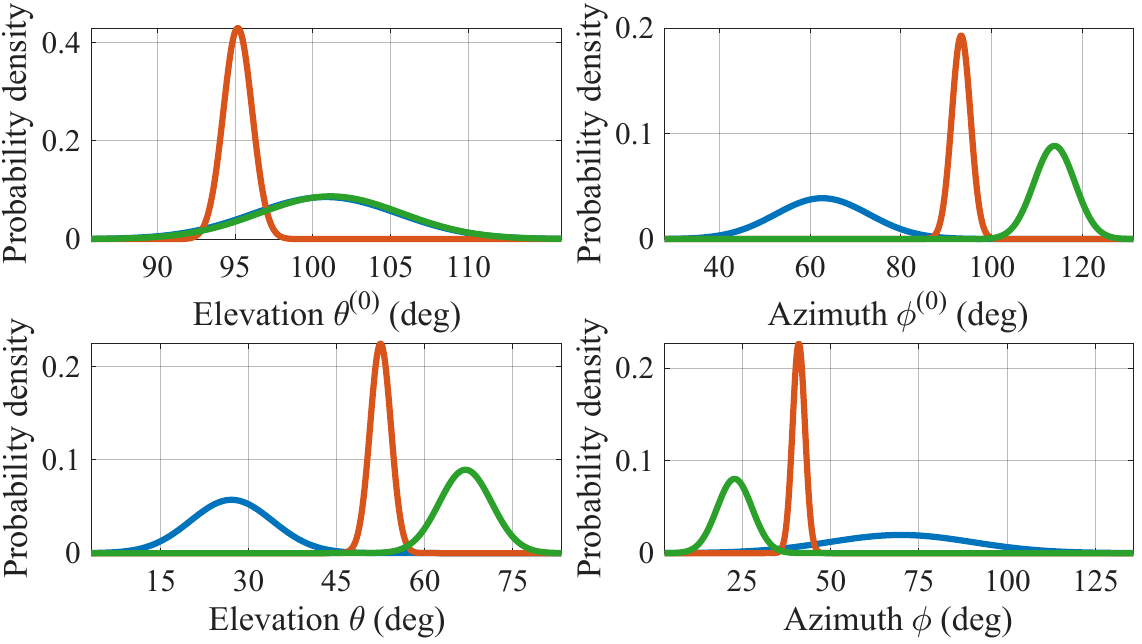}
     \vspace{-7mm}
    \caption{}
    \label{fig:subfig2}
  \end{subfigure}

  \begin{subfigure}[t]{0.85\linewidth}
    \centering
    \includegraphics[width=\linewidth]{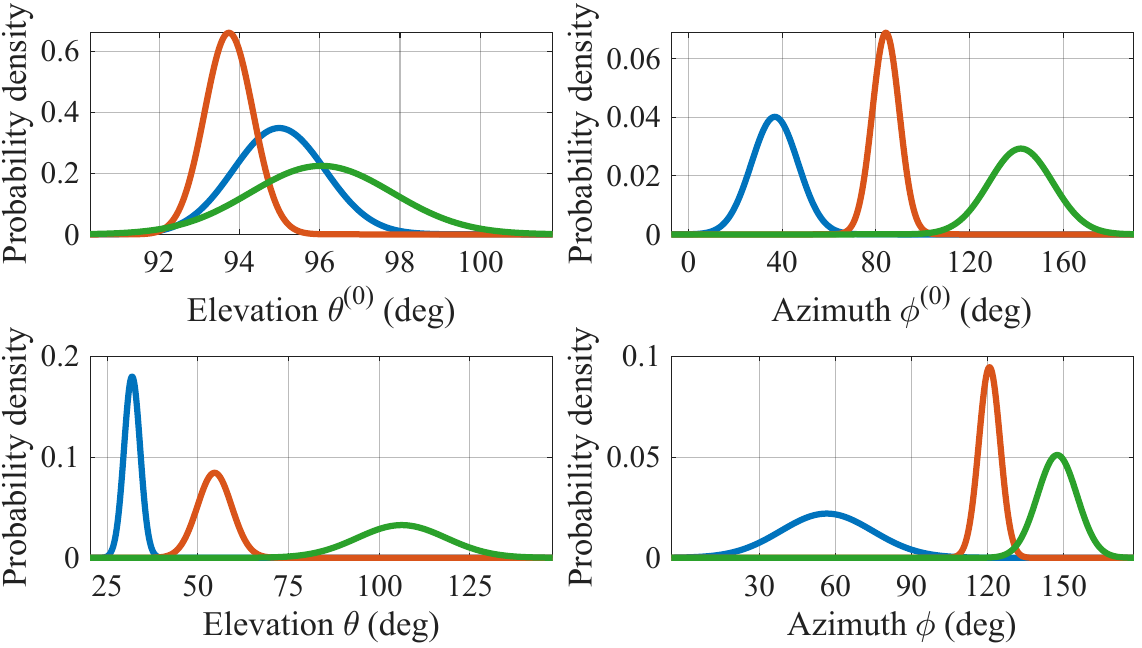}
     \vspace{-7mm}
    \caption{}
    \label{fig:subfig3}
  \end{subfigure}

\caption{(a) Multi-path environments of the Florence Duomo (left) and the Inha Aerospace Campus (right), and the SP AoA distributions in (b) and (c), respectively, shown in the initial (top) and optimally rotated (bottom) frames.}
\label{fig:real_inha_all}
\vspace{-6mm} 
\end{figure}

\subsection{Performance Evaluation}
Fig.~\ref{fig:rmse}(\subref{fig:aoa_rmse}) and~(\subref{fig:toa_rmse}) show the joint AoA RMSE and ToA RMSE, respectively, versus the standard deviations of the prior elevation and azimuth AoA statistics, say $\sigma$ and $\varsigma$, with $\sigma=\varsigma$. A larger standard deviation corresponds to less informative prior AoA statistics. Several interesting observations are made. First, the proposed algorithm outperforms all benchmarks for both AoA and ToA estimations and approaches the Single-Path bound. This gain arises because the orientation control leverages the prior AoA statistics to separate the SPs along the scan axes and facilitate unambiguous SFP pairing. Second, SOMP benefits only marginally from orientation control because it is not designed to exploit the prior AoA statistics when resolving highly correlated SPs. Third, as the standard deviation increases, the performance of the proposed algorithm degrades and its advantage over the benchmarks diminishes, which is consistent with the reduced informativeness of the prior AoA statistics. Finally, more accurate AoA estimates help construct the subsequent AoA-matched spatial filter in Sec.~\ref{sec:traj_refine}, thereby enabling more precise ToA estimation. Due to this close dependence and the page limit, the following results focus only on AoA-estimation performance.

\begin{figure}
  \centering
  \begin{subfigure}[t]{0.85\linewidth}
    \centering
    \includegraphics[width=\linewidth]{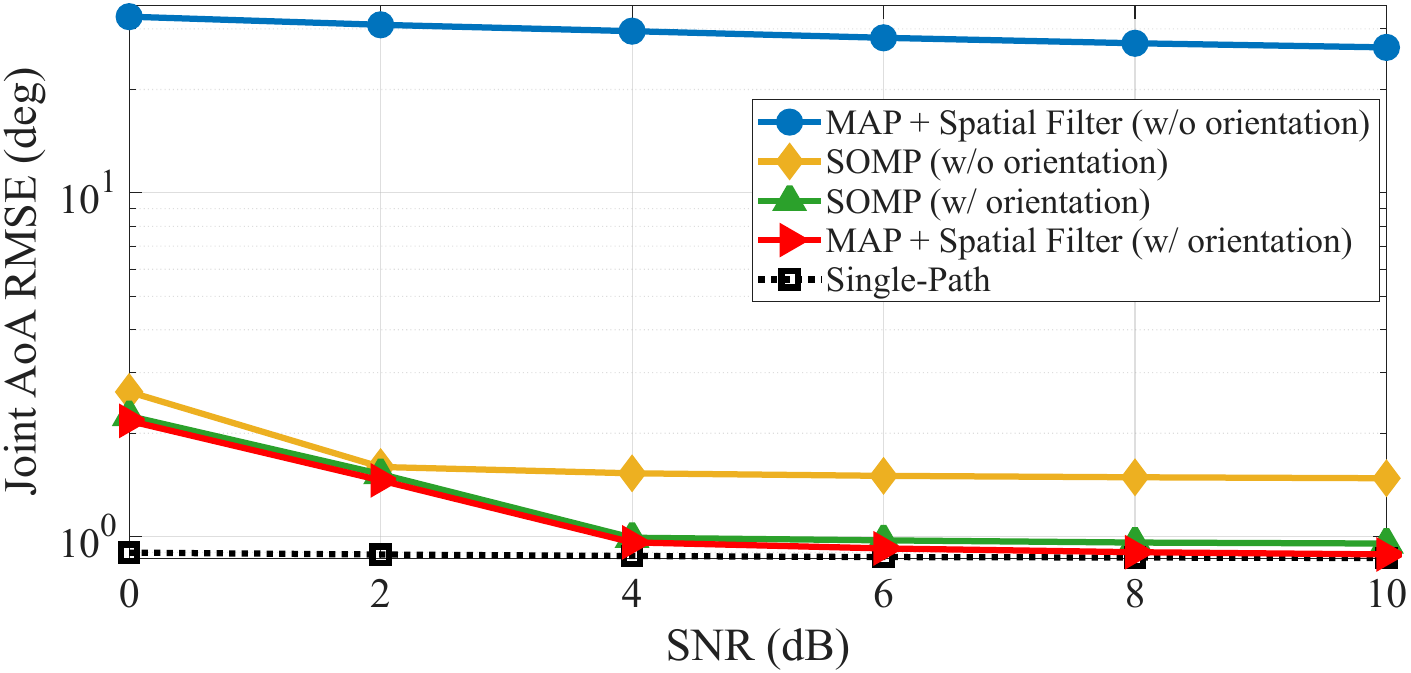}
    \caption{}
    \label{fig:real_florence}
  \end{subfigure}
  \begin{subfigure}[t]{0.85\linewidth}
    \centering
    \includegraphics[width=\linewidth]{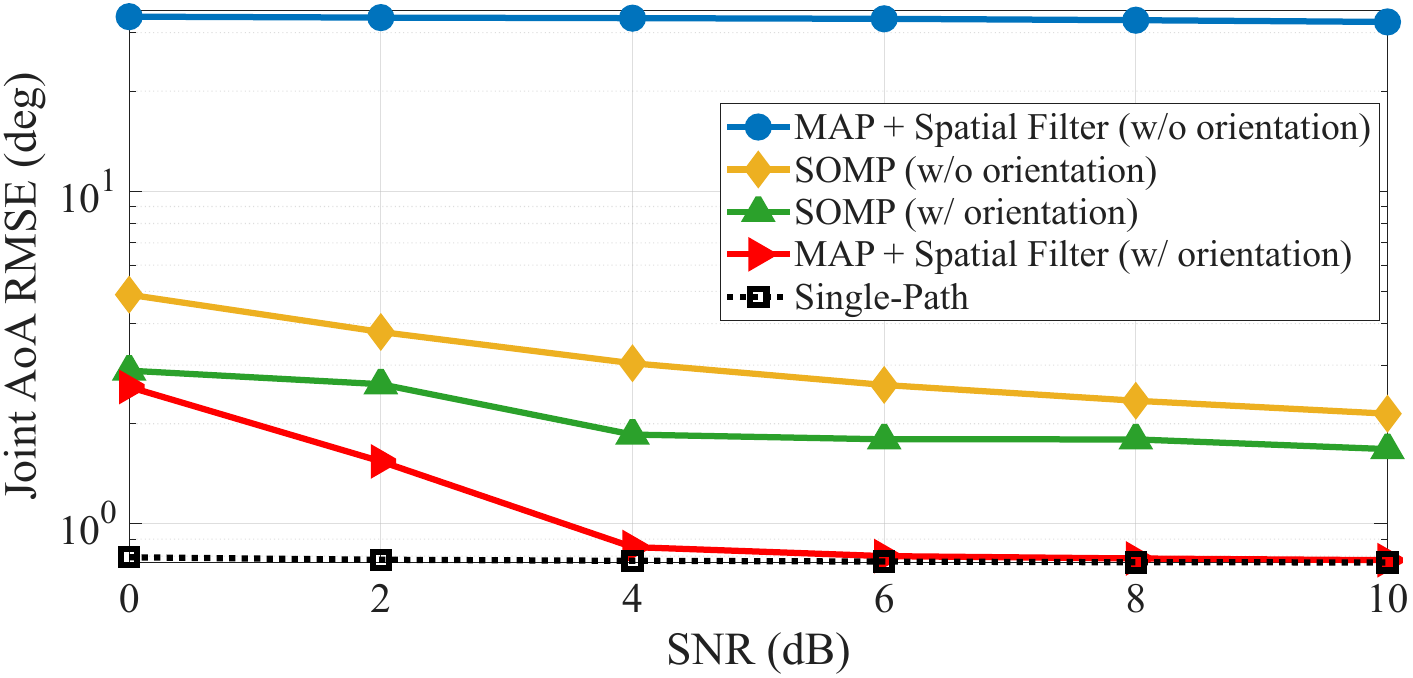}
    \caption{}
    \label{fig:real_inha}
  \end{subfigure}
  \caption{Joint AoA RMSE versus SNR in realistic ray-traced environments:
  (a) Florence Duomo and (b) Inha Aerospace Campus. }
  \label{fig:real_rmse}
  \vspace{-6mm} 
\end{figure}

Fig.~\ref{fig:snr} presents the joint AoA RMSE as a function of SNR. Except for the proposed algorithm, all benchmarks exhibit performance saturation and achieve only marginal gains at high SNR. This is because increasing SNR strengthens all SP components simultaneously, but does not improve their separability when they are highly correlated. On the other hand, the proposed algorithm consistently improves with SNR and nearly attains the Single-Path bound beyond $10$~dB. By controlling the plate orientation, the SP projections are more likely to be well separated along the scan axes, enabling more accurate AoA estimation with substantially reduced inter-SP interference. 

Fig.~\ref{fig:bandwidth} presents the effect of bandwidth on joint AoA RMSE.
Similar to the SNR counterpart in Fig.~\ref{fig:snr}, the proposed algorithm consistently outperforms the considered benchmarks. As recalled in Sec.~\ref{sec: setup}, all SPs' ToAs are generated within a single delay-resolution bin. In other words, increasing the bandwidth alone cannot effectively resolve the SP components in the delay domain. These results thus demonstrate the effectiveness of the proposed orientation control and the subsequent AoA-estimation algorithm for resolving closely spaced SPs.

We validate the proposed framework in two ray-traced environments emulating the Florence Duomo and the Inha Aerospace Campus. In both environments, the LoS path is blocked, and only single-bounce reflected paths are considered as SPs. The prior statistics in Table~\ref{tab:aoa_prior_stats} are extracted from $10{,}000$ realizations per environment with randomly placed transmitters, as
illustrated in Fig.~\ref{fig:real_inha_all}(\subref{fig:subfig1}). Fig.~\ref{fig:real_inha_all}(\subref{fig:subfig2}) and~(\subref{fig:subfig3}) show that the initially overlapped SPs become well separated on average after applying the optimized orientation. Fig.~\ref{fig:real_rmse} presents the resultant joint AoA RMSE versus SNR over $4{,}000$ realizations with \(L=3\), demonstrating that the proposed algorithm nearly attains the Single-Path bound.
Compared to the Florence Duomo, the performance gap between SOMP with orientation control and the proposed algorithm is more significant in the Inha campus, where nearby buildings generate reflected SPs with similar delays. Even in this harsh environment, the proposed algorithm effectively separates the SP components and accurately estimates their parameters, whereas the benchmark schemes fail to do so.   

\section{Conclusion}
This work has proposed a prior-guided MA control framework for agile multi-path sensing that exploits AoA statistics induced by the surrounding environment. Although these statistics provide only weak prior information, they play two important roles in the proposed framework. First, before MA scanning, the three-dimensional orientation of the movable plate is optimized to enhance the separation of multiple SP projections while controlling their dispersion, based on Fisher information analysis. Second, the MA performs only two linear scans, which yield complementary AoA-related parameters that should be paired to recover the elevation and azimuth AoAs of each SP. To enable reliable
pairing, the prior AoA statistics are incorporated into a MAP-based formulation,
together with a low-complexity candidate-selection algorithm. The recovered
AoAs are subsequently used to estimate the corresponding ToAs through
AoA-matched spatial filtering. With only one orientation adjustment and two
linear scans, the proposed framework accurately estimates the AoAs and ToAs of
multiple SPs while substantially reducing mechanical control overhead and
sensing latency. Numerical results under both stochastic channel models and
realistic ray-traced environments demonstrate that the proposed framework
outperforms the considered benchmarks and approaches the estimation performance
of the single-SP benchmark.

This work can be extended in several interesting directions. First, the estimated AoAs and ToAs can support localization, mapping, and tracking. Second, the framework can be extended to time-varying AoA statistics in dynamic and non-terrestrial environments. Finally, a closed-loop design can update the AoA priors from new SP estimates and adapt the MA control to environmental changes.

\bibliographystyle{IEEEtran}
\bibliography{reference/ieeeabrv,reference/reference}

\end{document}